\documentclass[10]{llncs}
\usepackage{float}
\floatstyle{ruled}
\usepackage{boxedminipage}
\usepackage{graphicx, subfigure}
\usepackage{epsfig}
\usepackage{theorem}
\usepackage{latexsym}
\usepackage{amssymb}
\newfloat{algorithm}{tbp}{loa}
\floatname{algorithm}{Algorithm}
\usepackage{setspace}

%\newtheorem{lem}{Lemma}

%%%%%%%%%%%%%%%%%%%%%%%%%%  NEW COMMAND   %%%%%%%%%%%%%%%%%%%%%%%%%%%%%%%%%%
\renewenvironment{proof}{{\it Proof. } }{{\hfill $\Box$}\vspace{.5pc}}

\setstretch{0.95}

\newtheorem{specification}{Specification}
\newtheorem{lem}{Lemma}
\newcommand{\keywords}[1]{\par\addvspace\baselineskip
\noindent\keywordname\enspace\ignorespaces#1}

\newcommand{\ie}{\emph{i.e., }}

%%%%%%%%%%%%%%%%%%%%%%%%%%%%%%%%%%%%%%%%%%%%%%%%%%%%%%%%%%%%
\begin{document}

\title{Snap-Stabilizing Message Forwarding algorithm on tree topologies %
        \thanks{This work is supported by ANR SPADES grant.}
}

\author{
  Alain Cournier\inst{1} \and  
  Swan Dubois\inst{2} \and 
  Anissa Lamani\inst{1} \and 
  Franck Petit\inst{2} \and 
  Vincent Villain\inst{1} 
}

\authorrunning{Cournier {\em et al}.}   % abbreviated author list (for running head)
%
%% list of authors for the TOC (use if author list has to be modified)
\tocauthor{ Alain Cournier, Swan Dubois, Anissa Lamani, Franck Petit, Vincent Villain}
\institute{
MIS, Universit\'e of Picardie Jules Verne, France 
\and
LiP6/CNRS/INRIA-REGAL, Universit\'e Pierre et Marie Curie - Paris 6, France 
}

\maketitle

\begin {abstract}

In this paper, we consider the message forwarding problem that consists in managing the network resources that are used to forward messages. Previous works on this problem provide solutions that either use a significant number of buffers (that is $n$ buffers per processor, where $n$ is the number of processors in the network) making the solution not scalable or, they reserve all the buffers from the sender to the receiver to forward only one message %while using $D$ buffers (where $D$ refers to the diameter of the network)
. The only solution that uses a constant number of buffers per link was introduced in \cite{CDLPV10}. However the solution works only on a chain networks. In this paper, we propose a snap-stabilizing algorithm for the message forwarding problem that uses the same complexity on the number of buffers as \cite{CDLPV10} and works on tree topologies.

\keywords {Message Forwarding, Snap-stabilization, Token Circulation} 
\end {abstract}

\section{Introduction}
It is known that the quality of a distributed system depends on its fault tolerance. Many fault-tolerance approaches have been introduced, for instance: Self-Stabilization \cite{D00} which allows the conception of systems that are tolerant of any arbitrary transient fault. A system is said to be self-stabilizing if starting from any arbitrary configuration, the system converges into the intended behavior in a finite time. Another instance of the fault-tolerance scheme is the snap-stabilization \cite{Bui07}. Snap-stabilizing systems always behave according to their specification, and this regardless of the starting configuration. Thus a snap-stabilizing solution can be seen as a self-stabilizing solution that stabilizes in zero time. 

In distributed systems, the {\em end-to-end communication} problem consists in delivery in finite time across the network of a sequence of data items generated at a node called the sender, to another node called the receiver. This problem comprises the following two sub-problems:
($i$) the {\em routing} problem, \ie the determination of the path followed by the messages to reach their destinations;
($ii$) the {\em message forwarding} problem that consists in the management of network resources in order to forward messages. In this paper, we focus on the second problem whose aim is to design a protocol that manages the mechanism allowing the message to move from a node to another one on the path from a sender to a receiver. Each node on this path has a reserved memory space called buffer. With a finite number of buffers, the message forwarding problem consists in avoiding deadlock and livelock situations.

The message forwarding problem has been well investigated in a non faulty setting~\cite
{Duato96,MerlinS78,Toueg80,TouegU81}. In \cite{APV96,KOR95} self-stabilizing solutions were proposed.  Both solutions deal with
network dynamic, \ie systems in which links can be added or removed.  However, they 
assume that the routing tables are correct (loop-free). Thus the proposed
solutions cannot ensure absence of deadlocks or message loss during the stabilization time. 

In this paper, we address the problem of providing a snap-stabilizing protocol for this problem. % assuming the presence of a self-stabilizing protocol
%for the computation of the routing tables. 
Snap-stabilization provides the desirable property of delivering to its recipient every message generated after
the faults, once and only once  even if the routing tables are not (yet) stabilized. 
Some snap-stabilizing
solutions have been proposed to solve the problem~\cite{CDV09-1,CDV09-2,CDLPV10}. In \cite{CDV09-1}, the problem
was solved using $n$ buffers per node (where $n$ denotes the number of processors in the network). The number of
buffers was reduced in \cite{CDV09-2} to $D$ buffers per node (where $D$ refers to the diameter of the network).
However, the solution works by reserving the entire sequence of buffers leading from the sender to the receiver. Note
that the first solution is not suitable for large-scale systems whereas the second one has to reserve all the path
from the source to the destination for the transmission of only one message.
In \cite{CDLPV10}, a snap-stabilizing solution was proposed using a constant number of buffers per link. However the solution works only on chain topologies.

We provide a snap-stabilizing solution that solves the message forwarding problem in tree topologies
using the same complexity on the number of buffers as in \cite{CDLPV10} \ie $2\delta+1$ buffers by processor, where
$\delta$ is the degree of the processor in the system. 

\paragraph{\textbf{Road Map}}

The rest of the paper is organized as follow: Our Model is presented in Section \ref{sec:Model}. In Section \ref{sec:Solution}, we provide our snap-stabilizing solution for the message forwarding problem. The proofs of correctness are given in Sub-Section \ref{subsec:Proofs}. Finally we conclude the paper in Section \ref{sec:Conclusion}.

\section{Model and Definitions}\label{sec:Model}

\paragraph{\textbf{Network}.} 
We consider in this paper a network as an undirected connected graph $G=(V,E)$ where $V$ is the set of nodes (processors) and $E$ is the set of bidirectional communication links. Two processors $p$ and $q$ are said to be neighbours if and only if there is a communication link $(p,q)$ between the two processors. Note that, every processor is able to distinguish all its links. To simplify the presentation we refer to the link $(p,q)$ by the label $q$ in the code of $p$. In our case we consider that the network is a tree of $n$ processors.\\

\paragraph{\textbf{Computational model}.}
In this paper we consider the classical local shared memory model introduced by Dijkstra~\cite{D74} known as the state model. In this model communications between neighbours are modelled by direct reading of variables instead of exchange of messages. The program of every processor consists in a set of shared variables (henceforth referred to as variable) and a finite number of actions. Each processor can write in its own variables and read its own variables and those of its neighbours. Each action is constituted as follow:

\begin{center} $<Label>::<Guard>$ $\rightarrow$ $<Statement>$ \end{center}

The guard of an action is a boolean expression involving the variables of $p$ and its neighbours. The statement is an action which updates one or more variables of $p$. Note that an action can be executed only if its guard is true. Each execution is decomposed into steps. 

The state of a processor is defined by the value of its variables. The state of a system is the product of the states of all processors. The local state refers to the state of a processor and the global state to the state of the system.  

Let $y$ $\in$ $C$ and $A$ an action of $p$ ($p$ $\in$ $V$). $A$ is {\em enabled} for $p$ in $y$ if and only if the guard of
$A$ is satisfied by $p$ in $y$. Processor $p$ is enabled in $y$ if and only if at least one action is enabled at $p$
in $y$. Let $P$ be a distributed protocol which is a collection of binary transition relations denoted by
$\rightarrow$, on $C$. An execution of a protocol $P$ is a maximal sequence of configurations  $e=
y_{0}y_{1}...y_{i}y_{i+1} \ldots$ such that, $\forall$ $i\ge0$, $y_{i} \rightarrow y_{i+1}$ (called a step) if $y_{i+1}$
exists, else $y_{i}$ is a terminal configuration. {\em Maximality} means that the sequence is either finite (and no action
of $P$ is enabled in the terminal configuration) or infinite. All executions considered here are assumed to be
maximal. $\xi$ is the set of all executions of $P$. 
Each step consists on two sequential phases atomically executed:
($i$) Every processor evaluates its guard;
($ii$) One or more enabled processors execute its enabled actions. 
When the two phases are done, the next step begins. 
This execution model is known as the \emph{distributed daemon}~\cite{BGM89}. 
We assume that the daemon is \emph{weakly fair}, meaning that if a processor $p$ is continuously $enabled$, 
then $p$ will be eventually chosen by the daemon to execute an action.

In this paper, we use a composition of protocols.  We assume that the above statement ($ii$) is applicable to every
protocol. In other words, each time an enabled processor $p$ is selected by the daemon, $p$ executes the enabled actions of every protocol.

\paragraph{\textbf{Snap-Stabilization}.} 
Let $\Gamma$ be a task, and $S_{\Gamma}$ a specification of $\Gamma$ . A protocol $P$ is snap-stabilizing for $S_{\Gamma}$ if and only if $\forall \Gamma \in \xi$, $\Gamma$ satisfies $S_{\Gamma}$.

%%%%%%%%%%%%%%%%%%%%%%%%%%%%%%%%%%%%%%%%%%%%%%%%%%%%%%%%%%%%%%%%%%%%%

\paragraph{\textbf{Message Forwarding Problem}.}

%Messages transit in the network in the Store and Forward model \ie they are stored temporally in each processor before being transmitted. Once the message is transmitted it can be deleted from the buffer of the previous processor. Note that in order to store messages, each processor uses a space memory called buffer. We assume in our case that each message needs only one buffer to be stored. It is clear that each processor uses only a finite number of buffers for the message forwarding. The aim is then to bound these resources that are used to forwards messages avoiding deadlocks (a configuration is which in every execution, some messages can not be transmitted) and starvation (a configuration from which, in every execution, some processors are no longer able to generate messages). Hence some control mechanisms must be introduced in order to avoid these kind of situations. 

The message forwarding problem is specified as follows: 

\begin{specification}[$SP$]\label{spec:SP}

A protocol $P$ satisfies $SP$ if and only if the following two requirements are satisfied in every execution of $P$:
%\begin{enumerate}
%\item 
$(i)$ Any message can be generated in a finite time.
%\item 
$(ii)$ Any valid message is delivered to its destination once and only once in a finite time.
%\end{enumerate}
\end{specification}

%%%%%%%%%%%%%%%%%%%%%%%%%%%%%%%%%%%%%%%%%%%%%%%%%%%%%%%%%%%%%%%%%%%%%
\paragraph{\textbf{Buffer Graph}} 

A Buffer Graph \cite{MS78} is defined as a directed graph on the buffers of the graph \ie the nodes are a subset of the buffers of the network and links are arcs connecting some pairs of buffers, indicating permitted message flow from one buffer to another one. Arcs are only permitted between buffers in the same node, or between buffers in distinct nodes which are connected by a communication link. %An instance of a buffer graph is giving in Figure \ref{BGEXP}.

%\begin{figure}
%   \centering
%   \includegraphics[width=5cm]{BGEXP.pdf}
%   \caption{Instance of a Buffer Graph}\label{BGEXP}
%\end{figure} 

%\noindent

\section{Message Forwarding}\label{sec:Solution}

In this section, we first give an overview of our snap stabilizing Solution for the message forwarding problem, then we present the formal description followed by some sketches of the proofs of correctness.

\subsection{Overview of the Solution}\label{subsec:Overview}

In this section, we provide an informal description of our snap stabilizing solution that solves the message forwarding problem and tolerates the corruption of the routing tables in the initial configuration.
We assume that there is a self-stabilizing algorithm that calculates the routing tables and runs simultaneously to our algorithm. We assume that our algorithm has access to the routing tables via the function $Next_p(d)$ which returns the identity of the neighbour to which $p$ must forward the message to reach the destination $d$. In the following we assume that there is no message in the system whose destination is not in the system.\\

%To solve the forwarding problem in the snap-stabilizing sitting, we will use a structure called Buffer Graph introduced in \cite{MS78}. A Buffer Graph is defined as a directed graph on the buffers of the graph \ie the nodes are a subset of the buffers of the network and links are arcs connecting some pairs of buffers, indicating permitted message flow from one buffer to another one. Note that Arcs are permitted only between buffers in the same node, or between buffers in distinct nodes which are connected by communication link.
 
Before detailing our solution let us define the buffer graph used in our solution: %(refer to Figure~\ref{BG}):

%\begin{figure}
%   \centering
%   \includegraphics[width=5cm]{BG.pdf}
%   \caption{Buffer Graph}\label{BG}
%\end{figure} 

Let $\delta(p)$ be the degree of the processor $p$ in the tree structure. Each processor $p$ has $(i)$ one internal buffer that we call Extra buffer denoted $EXT_p$. $(ii)$ $\delta(p)$ input buffers allowing $p$ to receive messages from its neighbors. Let $q\in N_p$, the input buffer of $p$ connected to the link $(p,q)$ is denoted by $IN_{p}(q)$. $(iii)$ $\delta(p)$ output buffers allowing it to send messages to its neighbors.  Let $q\in N_p$, the output buffer of $p$ connected to the link $(p,q)$ is denoted by $OUT_{p}(q)$. In other words, each processor $p$ has $2\delta(p)+1$ buffers. The generation of a message is always done in the output buffer of the link $(p,q)$ so that, according to the routing tables, $q$ is the next processor for the message in order to reach its destination.  

The overall idea of the algorithm is the following: When a processor wants to generate a message, it consults the routing tables to determine the next neighbour by which the message will transit in order to reach its destination. Once the message is on system, it is routed according to the routing tables: Let us refer to $nb(m,b)$ as the next buffer $b'$ of the message~$m$ stored in $b$,  
$b \in \{IN_{p}(q) \vee OUT_{p}(q)\}$, $q\in N_{p}$. We have the following properties:
\begin{enumerate}
 \item $nb(m,IN_{p}(q))= OUT_{p}(q')$ such as $q'$ is the next process by which $m$ has to transit to reach its destination.
\item $nb(m,OUT_{p}(q))= IN_{q}(p)$
 \end{enumerate}
 
%The overall idea of the algorithm is the following: When a processor wants to generate a message, it consults the routing tables to determine the next neighbour by which the message will transit in order to reach the destination. Recall that the generation is always done in the Output buffers. Once the message is on system, it is routed according to the routing tables:
Thus, if the message $m$ is in the Output buffer $OUT_p(q)$ such as $p$ is not the destination then it will be automatically copied in the Input buffer of $q$. If the the message $m$ is in the Input buffer of $p$ ($IN_p(q)$) then if $p$ is not the destination it consults the routing tables to determine which is the next process by which the message has to pass in order to meet its destination. 

Note that when the routing tables are stabilized and when all the messages are in the right direction, the first property $nb(m,IN_{p}(q))= OUT_{p}(q')$ is never verified for $q=q'$. However, this is not true when the routing tables are not yet stabilized and when some messages are in the wrong direction.
% In fact when the routing there are messages that are not in the correct direction two more properties are added to our buffer graph:
%
%\begin{itemize}
% \item $(a)$ $nb(m, IN_{p}(q))= EXT_p$ such as $q$ is the next process by which $m$ has to transit to reach its destination and $OUT_p(q)\ne \epsilon$.
%\item $nb(m, EXT_p)= OUT_{p}(q)$.
% \end{itemize}

%Note that these two new links are only used in some special cases as explained farther in the overview. 
Let us now recall the message progression. A buffer is said to be free if and only if it is empty (it contains no message) or contains the same message as the input buffer before it in the buffer graph buffer. In the opposite case, a buffer is said to busy. The transmission of messages produces the filling and the cleaning of each buffer, i.e., each buffer is alternatively free and busy. This mechanism clearly induces that free slots move into the buffer graph, a free slot corresponding to a free buffer at a given instant. 
%The arcs from $b$ to $b'$ is then said to be active.
%
%
%
%By moving in the system, the message $m$ will either meet its destination and hence it will be consumed in a finite time or it will reach $IN_p(q)$ such as $q$ is the next process by which $m$ has to pass by in order to reach its destination (the message is in the wrong direction). Note that this case never occurs when the routing tables are stabilized and when all the messages are in the right direction.

In the following, let us consider our buffer graph taking in account only active arcs (an arc is said to be active if it starts from a non empty buffer). Observe that in this case the sub graph introduced by the active arcs can be seen as a resource allocation graph where the buffers correspond to the resources, for instance if there is a message $m$ in $IN_p(q)$ such as $nb(m,IN_{p}(q))= OUT_{q'}(p)$ then $m$ is using the resource (buffer) $IN_p(q)$ and it is asking for another resource which is the output buffer $OUT_p(q')$. In the following we will refer to this sub graph as the active buffer graph. 

It is known in the literature that a deadlock situation appears only in the case there exists a cycle in the resource allocation graph. Note that this is also the case in our active buffer graph. Observe that because our buffer graph is built on a tree topology, if a cycle exists then we are sure that there are at least two messages $m$ and $m'$ that verifies the following condition: $nb(m,IN_{p}(q))= OUT_{p}(q)$ $\wedge$ $nb(m',IN_{p'}(q'))= OUT_{p'}(q')$. Since in this paper we consider a distributed system, it is impossible for a processor $p$ to know whether there is a cycle in the system or not if no mechanism is used to detect them. The only thing it can do is to suspect the presence of a cycle in the case there is one message in its input buffer $IN_p(q)$ that has to be sent to $OUT_p(q)$. In order to verify that, $p$ will initiate a token circulation that will follow the active buffer graph starting from the input buffer containing the message $m$. By doing so, the token circulation either finds a free buffer (refer to Figure \ref{Tokenn}, (b)) or detects a cycle. Note that two kinds of cycle can be detected: $(i)$ a Full-Cycle involving the first input buffer containing $m$ (refer to Figure \ref{Tokenn}, (a)) or $(ii)$ a Sub-Cycle that does not involve the input buffer that contains the message $m$ (refer to Figure \ref{Tokenn}, (c)). 

\begin{figure}
 \begin{minipage}[b]{.46\linewidth}
  \begin{center}
  \epsfig{figure=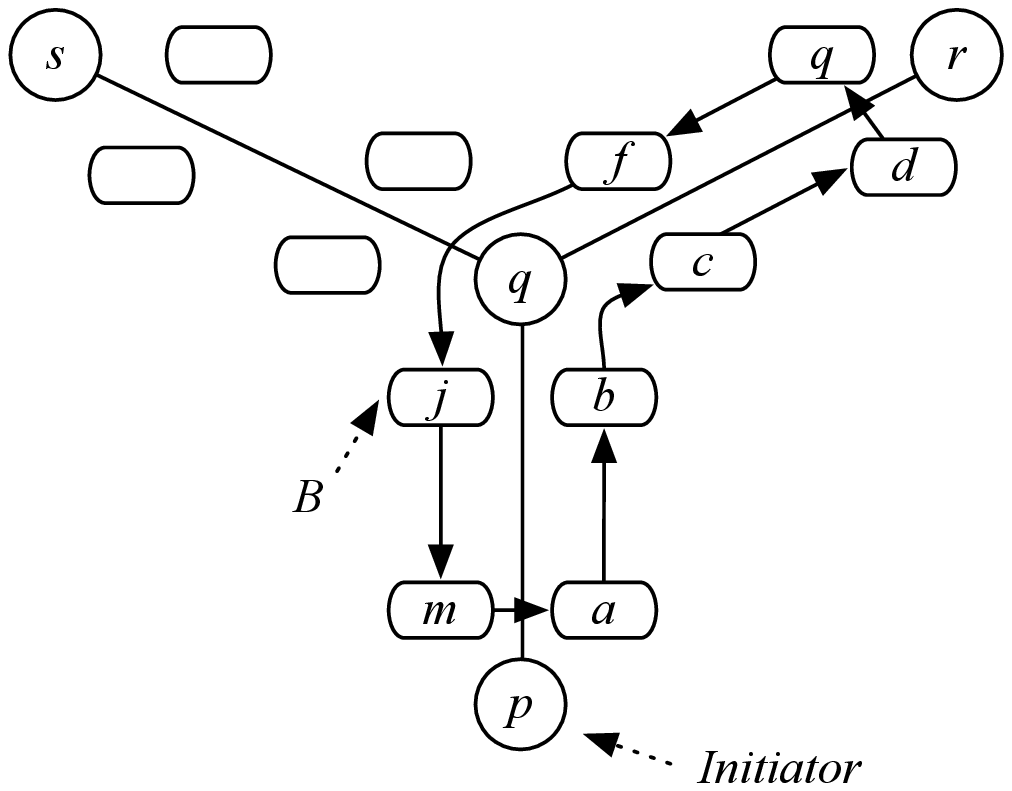,width=4cm}\\
  \textit{(a)} Instance of a Full-Cycle.
  \end{center}
 \end{minipage} \hfill
 \begin{minipage}[b]{.46\linewidth}
 \begin{center}
 \epsfig{figure=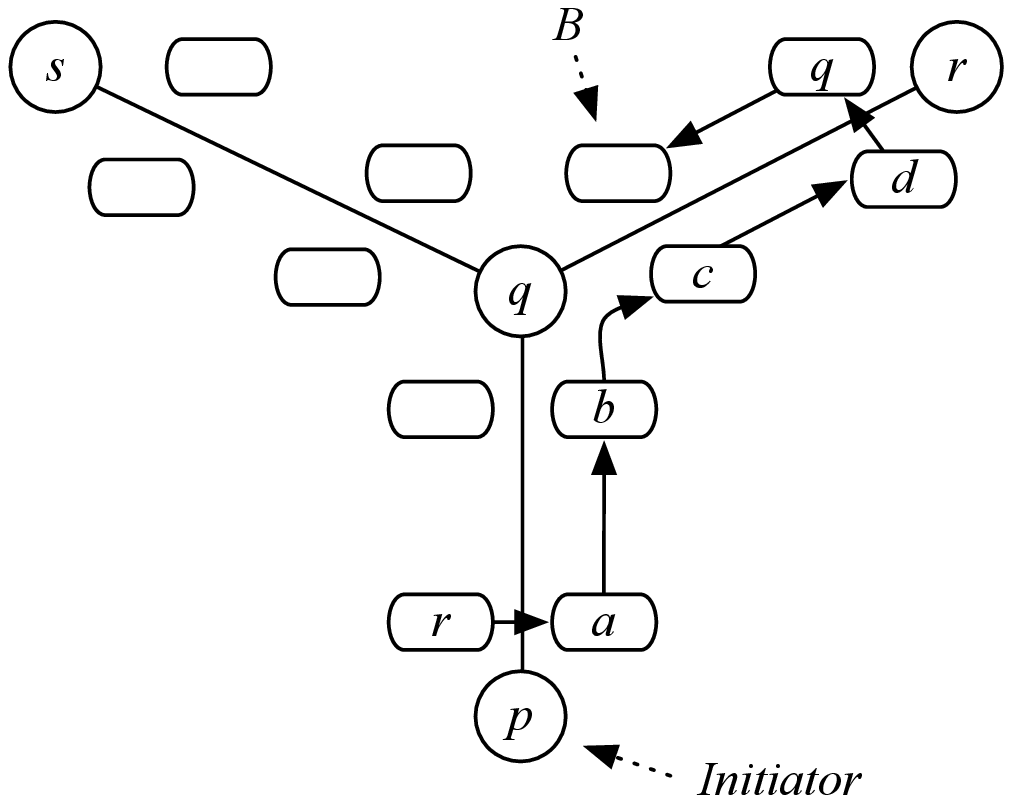,width=4cm}\\
  \textit{(b)} Free Buffer on the path
  \end{center}
 \end{minipage}\hfill
 \begin{minipage}[b]{.46\linewidth}
\begin{center}  
  \epsfig{figure=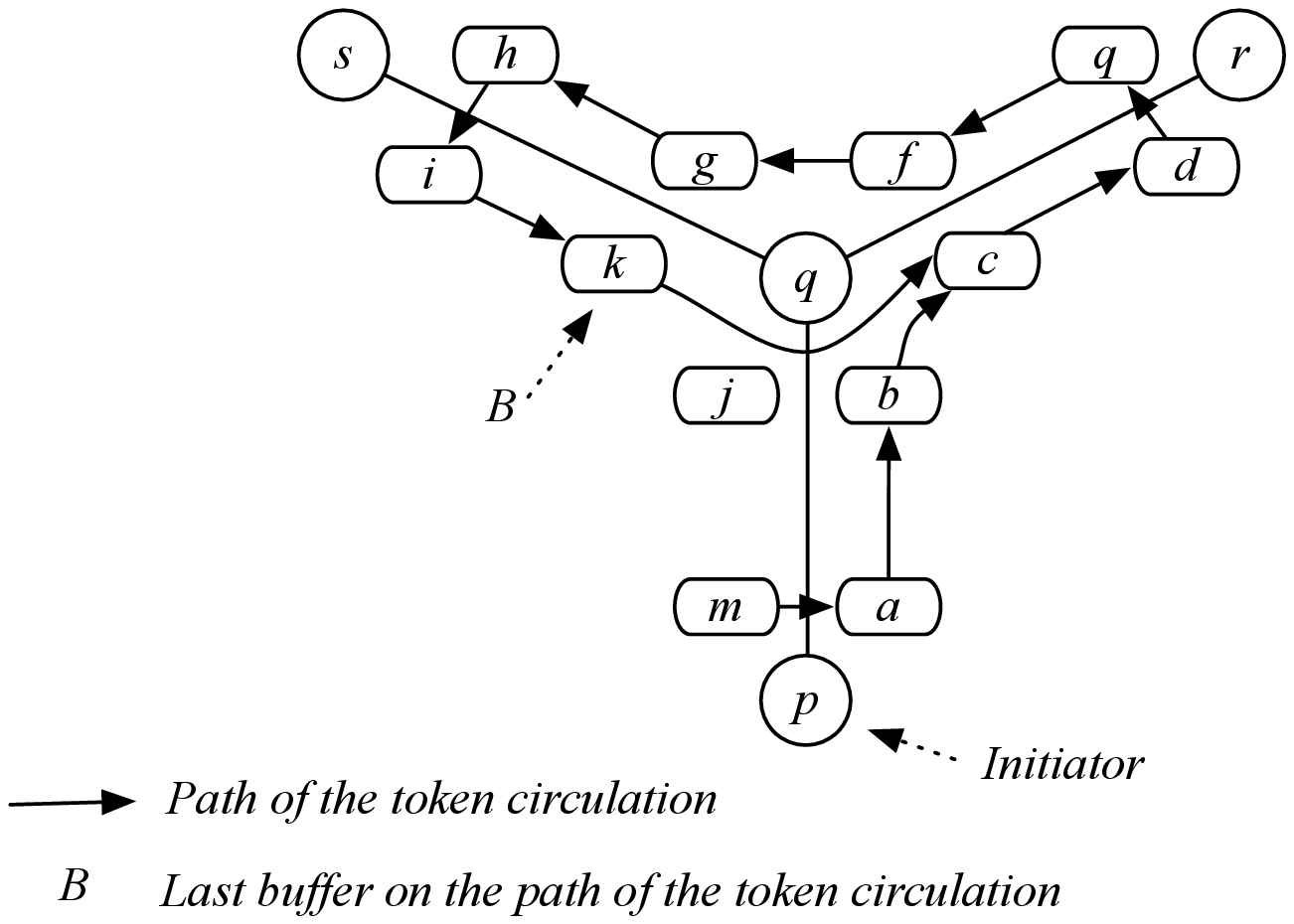,width=4.5cm}\\
  \textit{(c)} Instance of a Sub-Cycle.
  \end{center}
 \end{minipage}\hfill
  \begin{minipage}[b]{.46\linewidth}
 \begin{center}
 \epsfig{figure=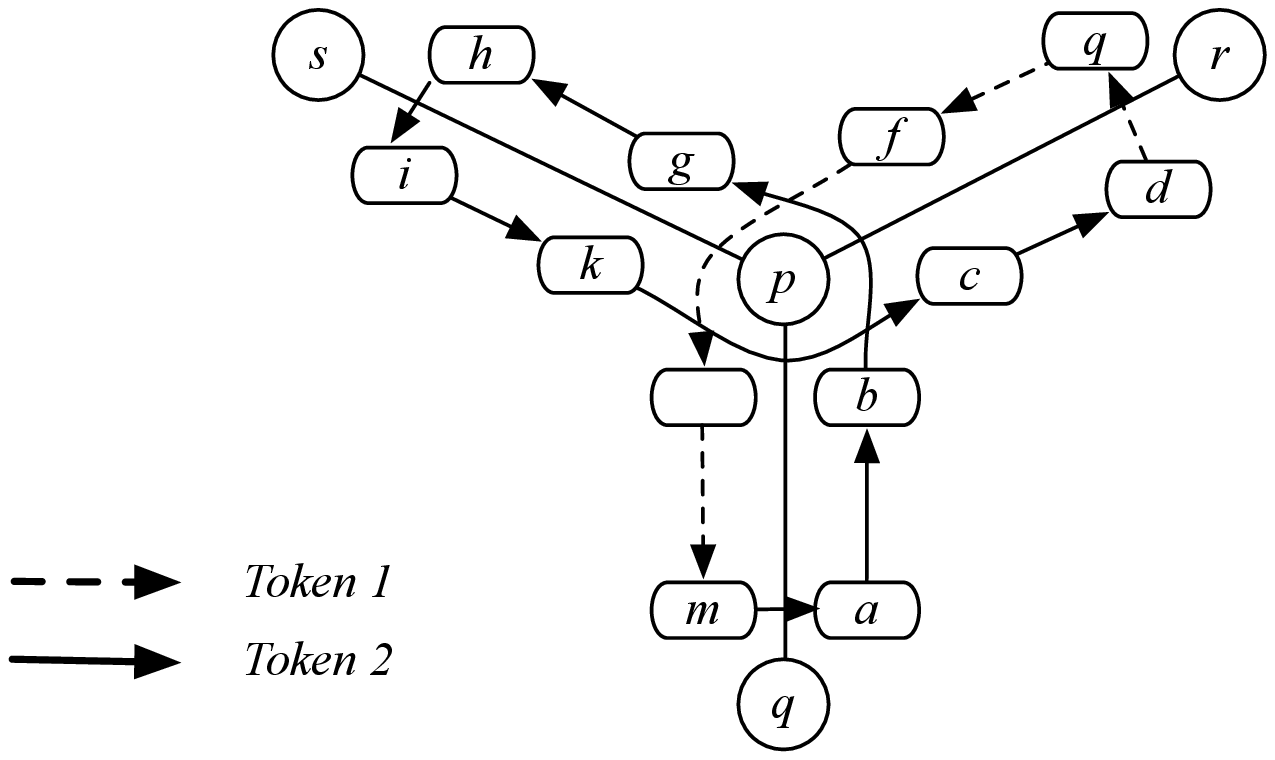,width=5cm}\\
  \textit{(d)} Token circulations deadlocked
  \end{center}
 \end{minipage}\hfill
 \caption{Instance of token circulations.\label{Tokenn}}
\end{figure}

If the token circulation has found an empty buffer (Let refer to it as $B$), the idea is to move the messages along the token circulation path to make the free slot initially on $B$ move. By doing so, we are sure that $OUT_p(q)$ becomes free. Thus $p$ can copy the message $m$ directly to $OUT_p(q)$ (Note that this action has the priority on all the other enabled actions). If the token circulation has detected a cycle then two sub-cases are possible according to the type of cycle detected: $(i)$ The case of a Full-Cycle: Note that in this case $p$ is the one that detects the cycle. The aim will be to release $OUT_p(q)$. $(ii)$ The case of a Sub-Cycle: In this case the processor containing the last buffer $B$ that is reached by the token is the one that detects the cycle (Processor $q$ in Figure \ref{Tokenn}, (c)). Note that $B$ is an input buffer. The aim in this case is to release the output buffer $B'$ by which the message $m$ in $B$ has to be forwarded to in order to meet its destination ($OUT_q(r)$ in Figure \ref{Tokenn}, (c)). Note that $B'$ is in this case part of the path of the token circulation. In both cases $(i)$ and $(ii)$, the processor that detects the cycle copies the message from the corresponding input buffer (either from $IN_p(q)$ or $B$) to its extra buffer. By doing so the processor releases its input buffer. The idea is to move messages on the token circulation's path to make the free slot that was created on the input buffer move. This ensures that the corresponding output buffer will be free in a finite time (either $OUT_p(q)$ or $B'$). Thus the message in the extra buffer can be copied in the free slot on the output buffer. Thus one cycle has been broken.

%copies in this case the message $m$ in its extra buffer ($EXT_p$) releasing $IN_p(q)$. The messages can then move along the token circulation path. By doing so, we are sure that the free slot will move also on the token circulation path. Thus $OUT_p(q)$ will be free in a finite time. Thus $p$ can copy the message $m$ from its extra buffer to $OUT_p(q)$. $(ii)$ The case of a Sub-Cycle: Let the buffer $B$ on $p'$, be the last buffer of the token circulation such as the message $m'$ in $B$ has to be forwarded to the buffer $B'$ which is already part of the path of the token circulation. Note that in this case $B$ is an input buffer of $p'$ and $B'$ is an output buffer of $p'$. $p'$ in this case copies the message $m'$ in its extra buffer releasing $B$. The messages move then like in the previous case along the path of the token circulation making the free slot initially in $B$ move as well. By doing so we are sure that $B'$ will become a free buffer. Thus $p'$ can copy the message $m'$ in $B'$. Note that in this case one cycle has been broken. 

Note that many token circulations can be executed in parallel. To avoid deadlock situations between the different token circulations (refer to Figure \ref{Tokenn}, (d)), the token circulation with an identifier $id$ can use a buffer of another token circulation having the identifier $id'$ if $id<id'$. Note that by doing so, one token circulation can break the path of another one when the messages move to escort the free slot. The free slot can be then lost. For instance, in Figure \ref{Tokennn}, we can observe that the free slot that was produced by $T1$ is taking away by $T2$. By moving messages on the path of $T2$, a new cycle is created again, involving $q$ and $p$. If we suppose that the same thing happens again such as the extra buffer of $s$ becomes full and that $s$ and $p$ becomes involved again in the another cycle then the system is deadlocked and we cannot do anything to solve it since we cannot erase any valid message. Thus we have to avoid to reach such a configuration dynamically. To do so, when the token circulation finds either a free buffer or detect a cycle, it does the reverse path in order to validate its path. Thus when the path is validated no other token circulation can use a buffer that is already in the validated path. 
Note that the token is now back to the initiator. To be sure that all the path of the token circulation is a correct path (it did not merge with another token circulation that was in the initial configuration), the initiator sends back the token to confirm all the path. In another hand, since the starting configuration can be an arbitrary configuration, we may have in the system a path of a token circulation that forms a cycle. To detect and release such a situation, a value is added to the state of each buffer in the following manner: If the buffer $B_i$ has the token with the value $x$, then when the next buffer $B_{i+1}$ receive the token it will set it value at $x+1$. Thus we are sure that in the case there is a cycle there will be two consecutive buffers $B$ and $B'$ having respectively $x$ and $x'$ as a value in the path of the cycle such as $x\ne x'$. Thus this kind of situation can be detected.

\begin{figure}
 \begin{minipage}[b]{.46\linewidth}
  \begin{center}
  \epsfig{figure=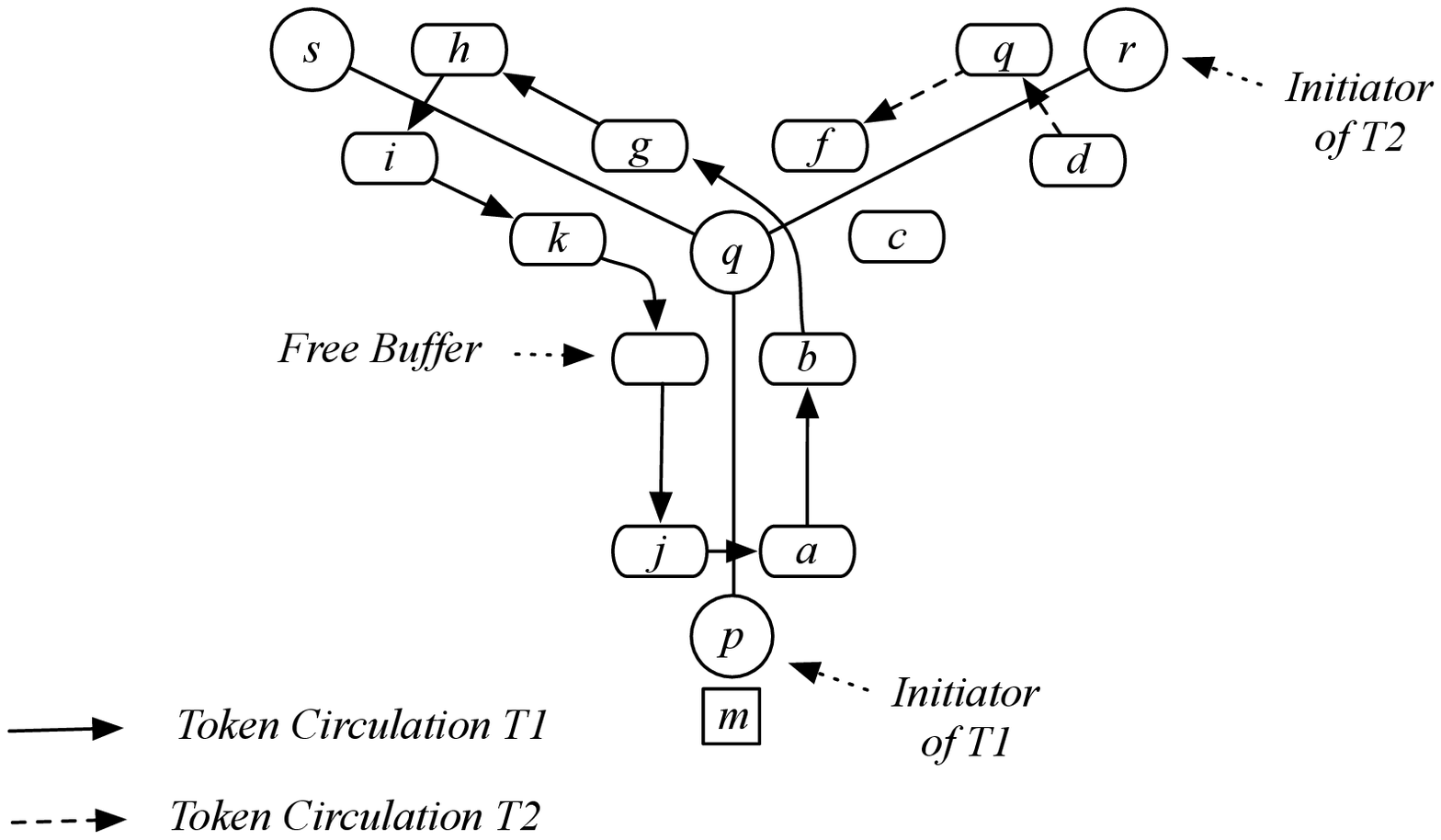,width=5.5cm}\\
  \textit{(a)} 
  \end{center}
 \end{minipage} \hfill
 \begin{minipage}[b]{.46\linewidth}
 \begin{center}
 \epsfig{figure=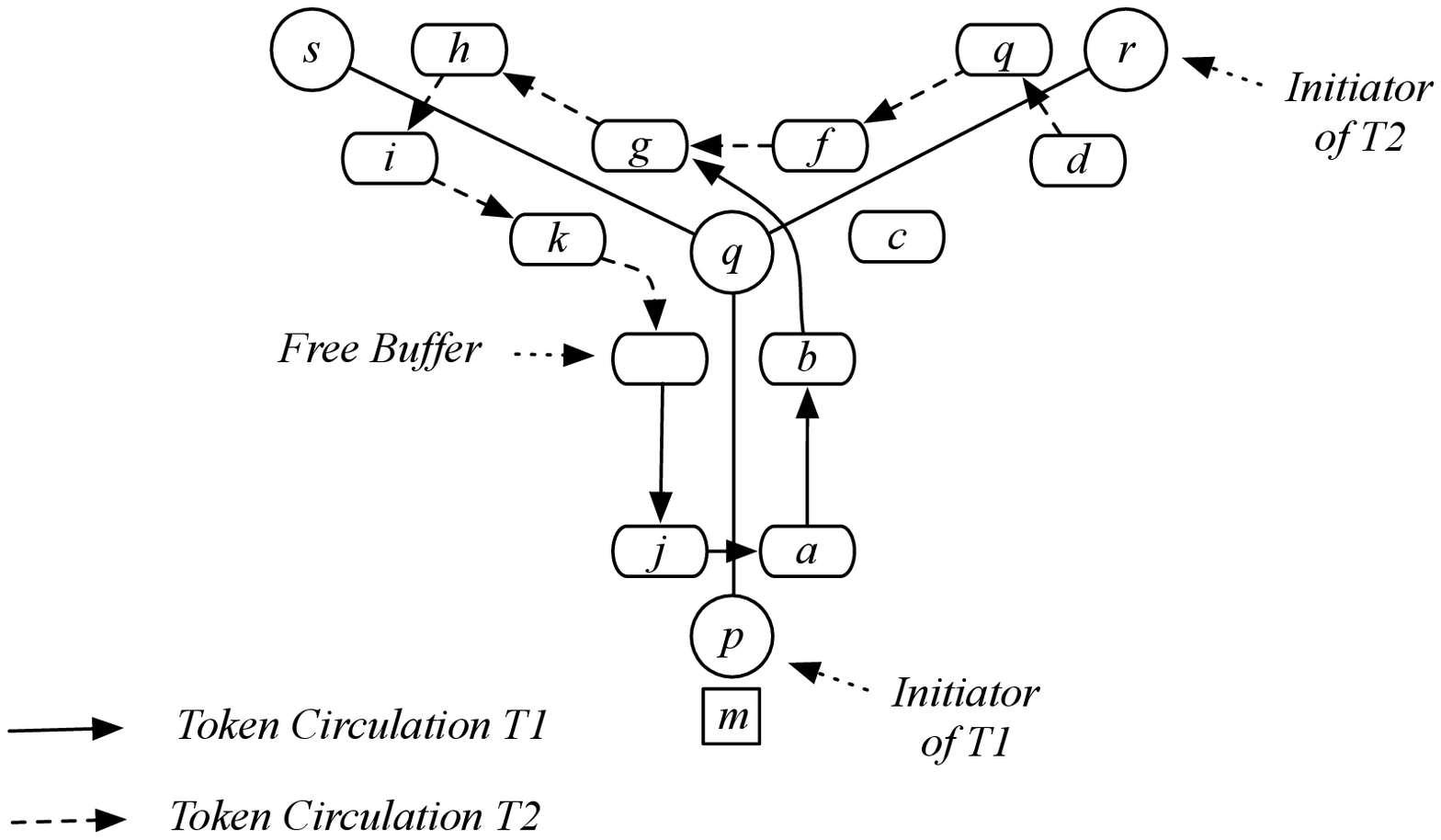,width=5.5cm}\\
  \textit{(b)}
  \end{center}
 \end{minipage}\hfill
 \begin{minipage}[b]{.46\linewidth}
\begin{center}  
  \epsfig{figure=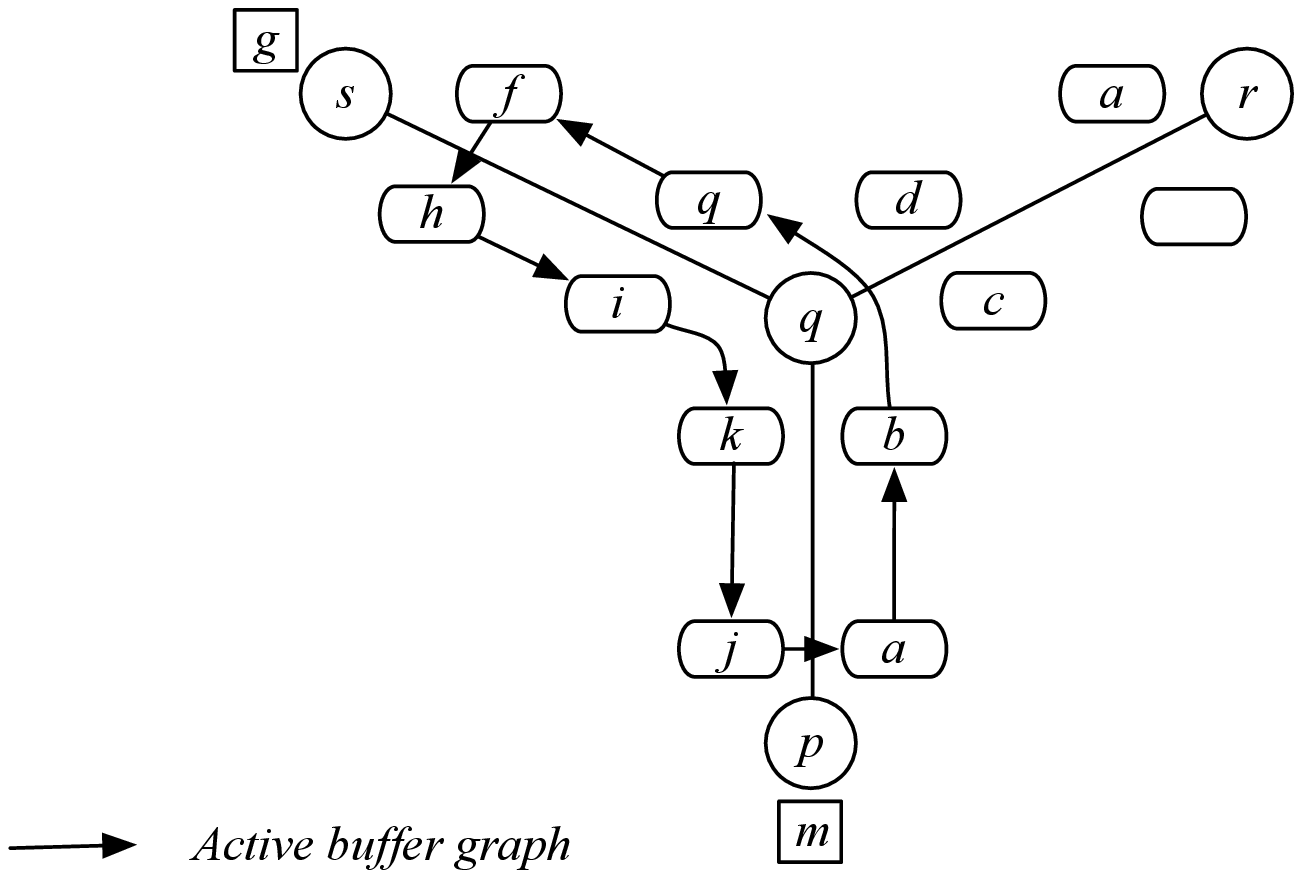,width=4.5cm}\\
  \textit{(c)} 
  \end{center}
 \end{minipage}\hfill 
 \caption{Instance of a problem.\label{Tokennn}}
\end{figure}

\subsection{Formal Description of the Solution}\label{subsec:Formal}

In this section we first define the data and variables that are used for the description of our algorithms. We then present the formal description of both the Token Circulation algorithm and the message forwarding algorithm.  

 Character {\tt '?'} in the predicates and the algorithms means {\em any value}.

\begin{itemize}
%\item $Token_p(q)$ : $\exists$ $q \in N_p$, $IN_p(q)=(m,d,c)$ $\wedge$ $Next_p(d)=q$ $\wedge$ $OUT_p(q)$ $\ne$ $\epsilon$ $\wedge$ $OUT_p(q)\ne IN_q(p)$
\pagebreak

\item { \textbf{Procedures}}\\
            \begin{itemize}
            \item $Next_{p}(d)$: refers to the neighbour of $p$ given by the routing tables for the destination $d$.
             \item $Deliver_{p}(m)$: delivers the message $m$ to the higher layer of $p$. \\
             \item $Choice(c)$: chooses a color for the message $m$ which is different from the color of the message that are in the buffers connected to the one that will contain $m$.\\
             \end{itemize}
 \item \textbf{Variables}
            \begin{itemize}
            \item $IN_{p}(q)$: The input buffer of $p$ associated to the link $(p,q)$.
            \item $OUT_{p}(q)$: The output buffer of $p$ associated to the link $(p,q)$.
            \item $EXT_{p}$: The Extra buffer of processor $p$.
             \item $S_{pqi}=(id,previous,next,phase,x)$: refers to the state of the input buffer of the process p on the link (p,q). id refers to the identity of the process that initiates the token circulation. previous is a pointer towards the output buffer from which the buffer $pqi$ received the token (it refers to the output buffer of $q$ on the link (q,p)). next is also a pointer that shows the next buffer that received the token from the input buffer of $p$ on the link (p,q). phase $\in \{S, V, F, C, E\}$ defines the state of the token circulation to determine which phase is executed respectively (Search, Validation, Confirm, Escort or non of these "Clean" State). $x$ is an integer which will be used in order to break incorrect cycles.
             \item $S_{pqo}=(id,previous,next,phase,x)$: As for the input buffer, $S_{pqo}=(id,previous,next,phase,x)$ refers to the state of the output buffer of the process p connected to the link (p,q). The attributes have the same meaning as previously.  
\item $prev_{pqo}$ : $q' \in N_p$ such as $S_{pq'i}=(id_{q'}, q'po, pqo, S,?)$ $\wedge$ $id_{q'}=min\{id_{q''}, q''\in N_p \wedge S_{pq''i}=(id_{q''},q'po, pqo, S,?)$\}.
              \item $Small_p$: $q \in N_p$ such as $\exists$ $q'\in N_p$, $S_{pqi}=(id_{q}, ?, pq'o, F,x)$ $\wedge$ $S_{pq'o}=(id_{q}, X, q'pi, F,z)$ $\wedge$ $X\ne pqi$ $\wedge$ $z\ne x+1$ $\wedge$ $id_{q}=min\{id_{q''}, q''\in N_p \wedge S_{pq''i}=(id_{q''},?, pro, F,x')$ $\wedge$ $S_{pro}=(id_{q''}, X', rpi, F,z')$ $\wedge$ $X'\ne pq''i$ $\wedge$ $z'\ne x'+1$.
            \end{itemize}
\item\textbf{Predicates} 
             \begin{itemize}
             \item $NO-Token_p$: $\forall$ $q \in N_p$, $S_{pqi}=(-1,NULL, NULL, C, -1)$ $\wedge$ $S_{pqo}=(-1,?,?,?)$ $\wedge$ $S_{qpo}=(-1,NULL, NULL, C, -1)$
             \end{itemize}
\item{ We define a fair pointer that chooses the actions that will be performed on the output buffer of a processor $p$. (Generation of a message or an internal transmission). }
\end{itemize}

\begin{algorithm}[H]
	\caption{Token circulation --- Initiation and Transmission \label{algo:DFTC1}}
\begin{scriptsize}

 %  \begin{itemize}
       %\item{
       \textbf{Token initiation}\\
$\textbf{R1}$: $Token_p(q)$ $\wedge$ $S_{pqo}=(-1,NULL,NULL,C,-1)$ $\wedge$ $S_{pqo}=S_{pqi}$ $\rightarrow$ $S_{pqi}:=(p,NULL,pqo,S,0)$, $S_{pqo}:=(p,pqi,qpi,S,1)$\\

       % \item
       \textbf{Token transmission}
              \begin{itemize}
                       \item \textbf{Search phase }
                          \begin{itemize}
                             % \item \textbf{Input Set}\\
\item $\textbf{R2}$: $\exists$ $q,q' \in N_p$, $S_{qpo}=(id,?,pqi,S,x)$ $\wedge$ $IN_p(q)=(m,d,c)$ $\wedge$ $Next_p(d)=q'$ $\wedge$ $S_{pq'o}\ne (id,?,?,?,?)$ $\wedge$ $S_{pqi}\ne(id',?,?,V \vee F \vee E,?)$ $\wedge$ ($S_{pqi}\ne(id'',?,?,?,?)$ $\wedge$ $id''<=id$)  $\rightarrow$ $S_{pqi}:=(id,qpo,pq'o,S,x+1)$
                             %  \item \textbf{Output Set}\\
\item $\textbf{R3}$: $\exists$ $q, q' \in N_p$, $prev_{pqo}=q'$ $\wedge$ $S_{pq'i}=(id,q'po,pqo,S,x)$ $\wedge$ ($S_{qpi}$ $\ne$ $(id,?,?,?,?)$ $\wedge$ $S_{pqo}$ $\ne$ $(id'',?,?,V \vee F \vee E,?)$ $\wedge$ $S_{pqo}$ $\ne$ $(id',?,?,?,?)$ $\wedge$ $id'<=id$ $\wedge$ $OUT_p(q)\ne \epsilon$ $\wedge$ $OUT_p(q)\ne IN_q(p)$ $\rightarrow$ $S_{pqo}:= (id,pq'i,qpi,S,x+1)$\\
                            \end{itemize}
                        \item \textbf{Validation phase}
                        \begin{itemize}
                        \item \textbf{Initiation}
                              \begin{itemize}
                                 % \item \textbf{From an input buffer to an output buffer}
                                     \item  $\textbf{R4}$: $\exists q,q' \in N_p$, $prevpqo = q'$ $\wedge$ $S_{pq'i} = (id,q'po,pqo,S,x)$ $\wedge$ $S_{pqo}\ne (id'',?,?,V \vee F \vee E,?)$ $\wedge$ $S_{pqo}\ne (id',?,?,?,?)$ $\wedge$ $id' < id$ $\wedge$ $S_{qpi}=(id,X,?,S,?)$ $\wedge$ $X\ne pqo$ $\wedge$ $OUT_p(q)\ne \epsilon$ $\wedge$ $OUT_p(q)\ne IN_q(p)$ $\rightarrow$ $S_{pqo} := (id, pq'i, qpi, V, x + 1)$
                                     \item $\textbf{R5}$: $\exists$ $q, q' \in N_p$, $S_{qpo}=(id,?,pqi,S,x)$ $\wedge$ $IN_p(q)=(m,d,c)$ $\wedge$ $Next_p(d)=q'$ $\wedge$ $S_{pq'o}=(id,X,?,S,z)$ $\wedge$ $X\ne pqi$ $\wedge$ $EXT_p=\epsilon$ $\wedge$ $S_{pqi}\ne(id',?,?,V \vee F \vee E,?)$ $\wedge$ ($S_{pqi}\ne(id'',?,?,?,?)$ $\wedge$ $id''<id$) $\rightarrow$ $S_{pqi}:=(id,qpo,pq'o,V,x+1)$
                                  \item  $\textbf{R6}$: $\exists$ $q,q' \in N_p$,  $prev_{pqo}=q'$ $\wedge$ $S_{pq'}i=(id,q'po,pqo,S,x)$ $\wedge$ [($OUT_p(q)=\epsilon$ $\vee$ $OUT_p(q)=IN_q(p)$)]  $\rightarrow$ $S_{pqo}:= (id,pq'i,NULL,V,x+1)$
                                  \item $\textbf{R7}$: $\exists$ $q, q' \in N_p$, $S_{qpo}=(id,?,pqi,S,x)$ $\wedge$ $IN_p(q)=\epsilon$ $\rightarrow$ $S_{pqi}:=(id,qpo,NULL,V,x+1)$\\
                                  %\item $\exists$ $q, q' \in N_p$, $prev_{pq'o}=q$ $\wedge$ $Spqi=(q',qpo,pq'o,S)$ $\wedge$ $Spq'o=(idle,idle,idle,idle)$ $\wedge$ $Sq'pi=(q',idle,q'po,T)$ $\wedge$  $OUT_p(q')\ne\epsilon$ $\wedge$ $OUT_p(q')\ne IN_q'(p)$  $\rightarrow$ $Spq'o=(q',pqi,q'pi,V)$\\
                             \end{itemize}
                        \item \textbf{Transmission}
                        \begin{itemize}
                                  %\item \textbf{Output Set}\\
                                 \item $\textbf{R8}$: $\exists$ $q, q'\in N_p$, $S_{pqo}=(id,pq'i,qpi,S,x)$ $\wedge$ $S_{qpi}=(id,pqo,?,V,x+1)$ $\wedge$ $x\ne 1$ $\wedge$ $S_{pq'i}\ne(id,?,pqo,F,x-1)$ $\rightarrow$ $S_{pqo}:=(id,?,qpi,V,x)$
                                  %\item \textbf{Input Set}\\
                                 \item $\textbf{R9}$: $\exists$ $q,q' \in N_p$, $S_{pqi}=(id,qpo,pq'o,S,x)$ $\wedge$ $S_{pq'o}=(id,pqi,?,V,x+1)$ $\wedge$ $S_{qpo}\ne(id,?,pqi,F,x-1)$ $\rightarrow$ $S_{pqi}:=(id,qpo,pq'o,V,x)$ \\
                                 
                             \end{itemize}
                              \end{itemize}
                              \item \textbf{Confirm phase}
                              \begin{itemize}
                        \item \textbf{Initiation}
                               \begin{itemize}
                             \item $\textbf{R10}$: $\exists$ $q \in N_p$, $S_{pqo}=(p,pqi,qpi,S,1)$ $\wedge$ $S_{pqi}=(p,NULL,pqo,S,0)$ $\wedge$ $S_{qpi}=(p,pqo,?,V,2)$  $\rightarrow$ $S_{pqo}:=(p,pqi,qpi,F,1)$, $S_{pqi}:=(p,NULL,pqo,F,0)$\\
                                \end{itemize}
                        \item \textbf{Transmission}
                        \begin{itemize}
                                  %\item \textbf{Input Set}\\
                                 \item $\textbf{R11}$: $\exists$ $q, q' \in N_p$, $S_{qpo}=(id,?,pqi,F,x)$ $\wedge$ $S_{pqi}=(id,qpo,pq'o,V,x+1)$ $\rightarrow$ $S_{pqi}:=(id,qpo,pq'o,F,x+1)$
                                 % \item \textbf{Output Set}\\
                                  \item $\textbf{R12}$: $\exists$ $q,q' \in N_p$, $prev_{pqo}=q'$ $\wedge$ 
                                    $S_{pq'i}=(id,?,pqo,F,x)$ $\wedge$ $S_{pqo}=(id,pq'i,qpi,V,x+1)$ $\rightarrow$ $S_{pqo}:=(id,pq'i,qpi,F,x+1)$  \\	                        
                             \end{itemize}
                             \end{itemize} 
                             
             \item \textbf{Escort phase}     
                                    \begin{itemize}
                                      \item \textbf{Initiation}
                                             \begin{itemize}
                                               \item $\textbf{R13}$: $\exists$ $q\in N_p$, $S_{pqi}=(id,idle,pqo,F,0)$ $\wedge$ $S_{qpo}=(id,?,pqi,F,x)$ $\wedge$ $x\geq 3$ $\wedge$ $S_{pqo}=(id,pqi,qpi,F,1)$ $\wedge$ $EXT_p=\epsilon$ $\rightarrow$ $S_{pqi}:=(id,idle,pqo,E,0)$                                           
                                                \item $\textbf{R14}$: $Small_p=q$ $\wedge$ $\exists$ $q'\in N_p$, $S_{pqi}=(id,qpo,pq'o,F,x)$ $\wedge$ $S_{pq'o}=(id,X,q'pi,F,z)$ $\wedge$ $X\ne pqi$ $\wedge$ $z\ne x+1$ $\wedge$ $EXT_p=\epsilon$ $\wedge$ $\nexists$ $q''\in N_p$, ($S_{pq''i}=(id',NULL,Z,F,0)$ $\wedge$ $S_{Z}=(id',pq''i,?,F,1)$) $\rightarrow$ $S_{pqi}:=(id,qpo,pq'o,E,x)$ 
                                                \item $\textbf{R15}$: $\exists$ $q \in N_p$, $S_{qpo}=(id,?,pqi,F,x)$ $\wedge$ $S_{pqi}=(id,qpo,idle,V,x+1)$ $\wedge$ $IN_p(q)=\epsilon$ $\rightarrow$ $S_{pqi}:=(id,qpo,idle,E,x+1)$
                                                \item $\textbf{R16}$: $\exists$ $q,q' \in N_p$, $S_{pqi}=(id,qpo,pq'o,F,x)$ $\wedge$ $S_{pq'o}=(id,pqi,idle,V,x+1)$  $\wedge$ [$OUT_p(q)=\epsilon$ $\vee$ $OUT_p(q)=IN_q(p)$] $\rightarrow$ $S_{pq'o}:=(id,pqi,idle,E,x+1)$\\
                                                \end{itemize}     
                                                  \item \textbf{Propagation}
                                                 \begin{itemize}
                                               \item $\textbf{R17}$: $\exists q,q' \in N_p$, $S_{pqo}=(id,?,qpi,F,x)$ $\wedge$ $S_{qpi}=(id,pqo,?,E,x+1 \vee 0)$ %$\wedge$ [$OUT_p(q)=\epsilon$ $\vee$ $OU_p(q)=IN_q(p)$] 
                                               $\rightarrow$ $S_{pqo}:=(id,?,qpi,E,x)$
                                               \item $\textbf{R18}$: $\exists q,q' \in N_p$, $S_{pqi}=(id,qpo,pq'o,F,x)$ $\wedge$ $S_{pq'o}=(id,pqi,q'pi,E,x+1)$ %$\wedge$ [$OUT_p(q)=\epsilon$ $\vee$ $OUT_p(q)=IN_q(p)$] 
                                               $\rightarrow$ $S_{pqi}:=(id,qpo,pq'o,E,x)$
                                               \item $\textbf{R19}$: $\exists q \in N_p$, $S_{pqo}=(id,pqi,qpi,F,1)$ $\wedge$ $S_{qpi}=(id,idle,pqo,E,0)$ $\wedge$ $S_{qpi}=(id,pqo,?,E,2)$ $\rightarrow$ $S_{pqo}:=(id,pqi,qpi,E,1)$

                                                  \end{itemize}                                                                                        
                          
                                        \end{itemize}                 
               
                  \end{itemize}

%        \item{\textbf{Token transmission (Search phase)}}\\
%        \begin{itemize}
%        \begin{itemize}
%      
%
%       
%
%        \item{\textbf{Validation phase (Search phase)}}\\
%
%
%
%        \end{itemize}
%       \item{\textbf{Validation phase}}\\
%- $\exists$ $q$ $\in$ $N_p$, $Sqpo=(p,?,pqi,S)$ $\wedge$ $IN_p(q)=(m,d,c)$ $\wedge$ $Next_p(d)=q$ $\wedge$ $Spqo=(p,idle,qpi,S)$ $\wedge$  $\rightarrow$ $Spqi=(p,qpo,idle,v)$\\
%
%- $\exists$ $q,q'$ $\in$ $N_p$, $Sqpo=(id,?,pqi,S)$ $\wedge$ $IN_p(q)=(m,d,c)$ $\wedge$ $Next_p(d)=q'$ $\wedge$ $Spq'o=(id,?,q'pi,S)$ $\wedge$ $Sq'po=(id,?,pq'i,S)$ $\rightarrow$ $Spqi=(p,qpo,idle,v)$\\
%
%- $\exists$ $q \in N_p$, $Spprev_{pqo}i=(id,prev_{pqo}po,pqo,S)$ $\wedge$ $Sqpi$ $\ne$ $(id,?,?,?)$ $\wedge$ $OUT_p(q) =\epsilon$ $\rightarrow$ $Spqo= (id,pprevi_{pqo},idle,V)$\\
%
%$\exists$ $q \in N_p$, $Sqpo=(id,?,pqi,S)$ $\wedge$ $IN_p(q)=\epsilon$ $\rightarrow$ $Spqi=(id,qpo,idle,V)$\\
%
%       \item \textbf{Validation propagation}\\
%       \begin{itemize}
%\item \textbf{from an input buffer to an output buffer}
%$\exists$ $q$ $\in$ $N_p$, $Spqo=(id,?,qpi,S)$ $\wedge$ $Sqpi=(id,?,pqo,V)$ $\rightarrow$ $Spqo=(id,?,qpi,V)$  \\
%   
% \item \textbf{from an output buffer to an input buffer}
% 
% $\exists$ $q,q'$, $Spqi=(id,qpo,pq'o,S)$ $\wedge$ $Spq'o=(id,pqi,?,V)$ $\rightarrow$ $Spqi=(id,qpo,pq'o,V)$\\
%       \end{itemize}
% 
% \item \textbf{Correction}
% 
% $\exists$ $q,q'$ $\in$ $N_p$, $Spqi=(id,qpo,pq'o,S)$ $\wedge$ $Sqpo=(id',?,?,?)$ $\wedge$ $id'=\epsilon$ $\wedge$ $id'\ne id$ $\rightarrow$ $Spqi=(idle,idle,idle,S)$\\
 
   %  \end{itemize}
   \end{scriptsize}
\end{algorithm}
%\pagebreak

\begin{algorithm}[H]
\caption{Token Circulation --- Cleaning Phase and Correction\label{algo:DFTC2}}
\begin{scriptsize}

%\begin{itemize}
%                                                  
\begin{itemize}
\item \textbf{-Cleaning phase}   
                          \begin{itemize}
                               \item \textbf{Initiation}
                                             \begin{itemize}                 
                                                 \item $\textbf{R20}$: $\exists$ $q\in N_p$, $S_{pqi}=(id, NULL, pqo, E,0)$ $\wedge$ $S_{pqo}=(id, pqi, qpi, E,1)$ $\rightarrow$ $S_{pqi}:=(-1, NULL, NULL, C,-1)$ 
                                                 \item $\textbf{R21}$: $\exists$ $q,q'\in N_p$, $S_{pq'i}=(id, q'po, pqo, E,x)$ $\wedge$ $S_{pqo}=(id, X, qpi, E,z)$ $\wedge$ $X\ne pq'i$ $\rightarrow$ $S_{pq'i}:=(-1, NULL, NULL, C,-1)$ 
                                                 \item $\textbf{R22}$: $\exists$ $q\in N_p$, $S_{pqi}=(id, qpo, NULL, E,x)$ $\wedge$ $S_{qpo}=(id, ?, pqi, E,x-1)$ $\rightarrow$ $S_{pqi}:=(-1, NULL, NULL, C,-1)$ 
                                                 \item $\textbf{R23}$: $\exists$ $q,q'\in N_p$, $S_{pqo}=(id, pq'i, qpi, E,x)$ $\wedge$ $S_{pq'i}=(id, q'po, pqo, E,x-1)$ $\rightarrow$ $S_{pq'i}:=(-1, NULL, NULL, C,-1)$ 
                                                 
                                            \end{itemize}  
                              \item \textbf{Propagation}   
                                      \begin{itemize}
                                        %   \item $\textbf{R20}$: $\exists$ $q, q'$ $\in N_p$, $Spq'i=(id,?,pqo,E,x)$ $\wedge$ $Spqo=(id,?,qpi,E,z)$ $\wedge$ [($x=0$ $\wedge$ $z=1$) $\vee$ ($z\ne x+1$)]$\rightarrow$ $Spq'i=(-1, NULL, NULL, C, 0)$
                                           \item $\textbf{R24}$: $\exists$ $q$ $\in N_p$, $S_{pqo}=(id,X,qpi,E,x)$ $\wedge$ $S_{X}\ne(id,?,pqo,F,x-1)$ $\wedge$ [($S_{qpi}=(id',?,?,?,?)$ $\wedge$  $id\ne id'$) $\vee$ $S_{qpi}=(-1,NULL,NULL,C,-1)$] $\rightarrow$ $S_{pqo}:=(-1, NULL, NULL, C, -1)$
                                            \item $\textbf{R25}$: $\exists$ $q$ $\in N_p$, $S_{pqi}=(id,qpo,pq'o,E,x)$ $S_{qpo}\ne(id,?,pqi,F,x-1)$ $\wedge$ [($S_{pq'o}=(id',?,?,?,?)$ $\wedge$  $id\ne id'$) $\vee$ $S_{pq'o}=(-1,NULL,NULL,C,-1)$] $\rightarrow$ $S_{pqo}:=(-1, NULL, NULL, C, -1)$
                          %$\textbf{R20}$: $\exists$ $q, q'$ $\in N_p$, $Spqo=(id,pq'i,qpi,F,x)$ $\wedge$ $Sqpi=(-1,?,?,?)$ $\wedge$ $Spq'i=(id,?,pqo,F,x-1)$ $\rightarrow$ $Spqo=(-1,NULL,NULL,C,-1)$, $Spq'o=(-1,NULL,NULL,C,-1)$\\
                          %\item $\exists$ $q$ $\in N_p$, $Spqo=(p,idle,qpi,F)$ $\wedge$ $Sqpi=(idle,?,?,?)$ $\rightarrow$ $Spqo=(idle,idle,idle, C)$\\
                          \end{itemize}
                             
                        \end{itemize}   

%\end{scriptsize}
%\end{algorithm}
%\pagebreak
%
%\begin{algorithm}[H]
%\caption{Token Circulation --- Correction Rules \label{algo:DFTC3}}
%\begin{scriptsize}

\item \textbf{Correction rules}
              \begin{itemize}
                           \item \textbf{Freeze Cleaning} 
                                  \begin{itemize}
                                   \item \textbf{Initiation}\\                                         
                                                 - $\textbf{R26}$: $\exists q,q' \in N_p$, $S_{pqo}=(id,pq'i,qpi,S \vee V \vee F,?)$ $\wedge$ [$S_{pq'i}=(-1,NULL,NULL,C,-1)$ $\vee$ ($S_{pq'i}=(id',?,?,?,?)$ $\wedge$ $id'\ne id$) $\vee$ ($S_{pq'i}=(id,?,M,?,?)$ $\wedge$ $M\ne pqo$)] $\rightarrow$ $S_{pqo}:=(id,pq'i,qpi,G,?)$                    
                                                                        
                                                 - $\textbf{R27}$: $\exists q,q' \in N_p$, $S_{pqi}=(id,qpo,pq'o,?,?)$ $\wedge$ ($S_{qpo}=(-1,NULL,NULL,C,-1)$ $\vee$ ($S_{qpo}=(id',?,?,?,?)$ $\wedge$ $id'\ne id$)) $\rightarrow$ $S_{pqi}:=(id,qpo,pq'o,G,?)$                            
                                                                   
                                                 - $\textbf{R28}$: $S_{pqi}=(p,NULL,pqo,?,x)$ $\wedge$ $x>0$ $\rightarrow$ $S_{pqi}:=(p,NULL,pqo,G,x)$
                                                                      
                                                   - $\textbf{R29}$: $\exists$ $q,q' \in N_p$, $S_{pqi}=(id,?,pq'o,?,x)$ $\wedge$ $S_{pqo}=(id,pq'i,qpi,?,z)$ $\wedge$ $z\ne x+1$ $\rightarrow$ $S_{pqi}:=(id,?,pq'o,G,x)$ 
                                                                                                   
                                                    - $\textbf{R30}$: $\exists$ $q \in N_p$, $S_{pqi}=(id,qpo,pqo,?,x)$ $\wedge$ $S_{qpo}=(id,?,pqi,?,z)$ $\wedge$ $z\ne x+1$ $\rightarrow$ $S_{pqi}:=(id,qpo,pqo,G,x)$    
                                                    
                                                     - $\textbf{R31}$: $\exists$ $q \in N_p$, [($S_{pqo}=(id,?,qpi,S,x)$ $\wedge$ $S_{qpi}=(id,pqo,?,F \vee E,x+1)$) $\vee$ ($S_{pqo}=(id,?,qpi,F,x)$ $\wedge$ $S_{qpi}=(id,pqo,?,S,x+1)$) $\vee$ ($S_{pqo}=(id,?,qpi,V,x)$ $\wedge$ $S_{qpi}=(id,pqo,?,E \vee S,x+1)$)] $\rightarrow$ $S_{pqo}:=(id,?,qpi,G,x)$
                                                     
                                                      - $\textbf{R32}$: $\exists$ $q,q' \in N_p$, [($S_{pqi}=(id,?,pq'o,S,x)$ $\wedge$ $S_{pq'o}=(id,pqi,?,F \vee E,x+1)$) $\vee$ ($S_{pqi}=(id,?,pq'o,F,x)$ $\wedge$ $S_{pq'o}=(id,pqi,?,S,x+1)$) $\vee$ ($S_{pqi}=(id,?,pq'o,V,x)$ $\wedge$ $S_{pq'o}=(id,pqo,?,E \vee S,x+1)$)] $\rightarrow$ $S_{pqi}:=(id,?,pq'o,G,x)$\\     
                                                                                                   
                                   \item \textbf{Propagation}  \\                                   
                                                  - $\textbf{R33}$:$\exists$ $q,q' \in N_p$, $S_{qpo}=(id,?,pqi,G,?)$ $\wedge$ $S_{pqi}=(id,qpo,pq'o,S \vee V \vee F \vee E,?)$ $\rightarrow$ $S_{pqi}:=(id,qpo,pq'o,G,?)$  
                                                                                              
                                                  - $\textbf{R34}$: $\exists$ $q,q' \in N_p$, $prev_{pqo}=q$ $\wedge$ $S_{pqi}=(id,?,pq'i,G,?)$ $\wedge$ $S_{pq'o}=(id,qpo,q'pi,S \vee V \vee F \vee E,?)$ $\rightarrow$ $S_{pq'o}:=(id,pqi,q'pi,G,?)$\\
                                                 
                                   \item \textbf{Cleaning}\\                                   
                                                   - $\textbf{R35}$: $\exists q,q' \in N_p$, $S_{pqi}=(id,qpo,pq'o,G,x)$ $\wedge$ [$S_{pq'o}=(-1,NULL,NULL,C,-1)$ $\vee$ ($S_{pq'o}=(id',?,?,?,?)$ $\wedge$ $id'\ne id$) $\vee$ ($S_{pq'o}=(id,?,qpi,G,z)$ %$\wedge$ $z\ne x+1$)] 
                                                   $\rightarrow$ $S_{pqi}:=(-1,NULL,NULL,C,-1)$ 
                                                                                                     
                                                   - $\textbf{R36}$: $\exists q,q' \in N_p$, $S_{pqo}=(id,pq'i,qpi,G,x)$ $\wedge$ [$S_{qpi}=(-1,NULL,NULL,C,-1)$ $\vee$ ($S_{qpi}=(id',?,?,?,?)$ $\wedge$ $id'\ne id$) $\vee$ ($S_{qpi}=(id,pqo,?,G,z)$ %$\wedge$ $z\ne x+1$)] 
                                                   $\rightarrow$ $S_{pqo}:=(-1,NULL,NULL,C,-1)$\\

                                  \end{itemize}
                             
                      \item $\textbf{R37}$: $\exists$ $q,q' \in N_p$, $S_{pqi}=(id,?,pq'o,G,x)$ $\wedge$ $S_{pq'o}=(id,pqi,q'pi,G,z)$ $\wedge$ $z\ne x+1$ $\rightarrow$ $S_{pqi}:=(-1,NULL,NULL,C,-1)$
                      \item $\textbf{R38}$: $\exists$ $q \in N_p$, $S_{pqi}=(id,qpo,pqo,G,x)$ $\wedge$ $S_{qpo}=(id,?,pqi,G,z)$ $\wedge$ $z\ne x+1$ $\rightarrow$ $S_{pqi}:=(-1,NULL,NULL,C,-1)$                                          
                       \item $\textbf{R39}$: $Token_p(q)$ $\wedge$ $S_{pqi}=(p,?,?,?,?)$ $\rightarrow$ $Token_p(q):=false$      
                       \item $\textbf{R40}$: $\exists$ $q,q' \in N_p$, $S_{pqi}=(id,qpo,pq'o,F,x)$ $\wedge$ $S_{pq'o}=(id,X,q'pi,S \vee V,z)$ $\wedge$ $z\ne x+1$ $\rightarrow$ $S_{pqi}:=(-1,NULL,NULL,C,-1)$   
                       \item $\textbf{R41}$: $\exists$ $q \in N_p$, $S_{pqi}=(id,qpo,NULL, S \vee V \vee F ,x)$ $\wedge$ $IN_p(q)\ne \epsilon$ $\rightarrow$ $S_{pqi}:=(-1,NULL,NULL,C,-1)$ 
                       \item$\textbf{R42}$: $\exists$ $q,q' \in N_p$, $S_{pqo}=(id,pq'i,NULL,S \vee V \vee F,x)$ $\wedge$ $OUT_p(q)\ne \epsilon$ $\rightarrow$ $S_{pqo}:=(-1,NULL,NULL,C,-1)$   
                        \item $\textbf{R43}$ $\exists$ $q$ $\in N_p$, $S_{pqo}=(id,?,qpi,V \vee F,x)$ $\wedge$ [($S_{qpi}=(id',?,?,?,?)$ $\wedge$  $id\ne id'$) $\vee$ $S_{qpi}=(-1,NULL,NULL,C,-1)$] $\rightarrow$ $S_{pqo}:=(-1, NULL, NULL, C, -1)$
                        \item  $\textbf{R44}$   $\exists$ $q$ $\in N_p$, $S_{pqi}=(id,qpo,pq'o,V \vee F \vee E,x)$ $\wedge$ [($S_{pq'o}=(id',?,?,?,?)$ $\wedge$  $id\ne id'$) $\vee$ $S_{pq'o}=(-1,NULL,NULL,C,-1)$] $\rightarrow$ $S_{pqo}:=(-1, NULL, NULL, C, -1)$                    
                       
 \end{itemize}
      \end{itemize}
\end{scriptsize}
\end{algorithm}
%\pagebreak

%%%%%%%%%%%%%%%%%%%%%%% FORWARDING Algorithm %%%%%%%%%%%%%%%%%%%%%%%%%%%%%%%%%%
\begin{algorithm}[H]
\caption{Message Forwarding \label{algo:MF}}
\begin{scriptsize}

   \begin{itemize}
       \item{\textbf{Message generation (For every processor) }}%%%%%%%%%%%%%%%%
       
$\textbf{R'1}$: $Request_{p}$ $\wedge$ $ Next_{p}(d)=q$ $\wedge$ [$OUT_{p}(q)=\epsilon$ $\vee$ $OUT_{p}(q)=IN_{q}(p)$] $\wedge$ $NO-Token$ $\rightarrow$ $OUT_{p}(q):=(m,d,choice(c))$, $Request_{p}:=false$.\\

       \item{\textbf{Message consumption (For every processor) }}%%%%%%%%%%%%%%%%%
       
$\textbf{R'2}$: $\exists q \in N_{p}$, $IN_p(q)=(m,d,c)$ $\wedge$ $d=p$ $\wedge$ $OUT_{q}(p) \ne IN_{p}(q)$  $\rightarrow$ $deliver_{p}(m)$, $IN_{p}(q):=OUT_{q}(p)$.\\
     
     %$\exists q \in N_{p}$, $OUT_p(q)=(m,d,c)$ $\wedge$ $d=p$  $\rightarrow$ $deliver_{p}(m)$, $OUT_{p}(q)=\epsilon$.\\

        \item{\textbf{Internal transmission}}%%%%%%%%%%%%%%%%%
 
 %        \begin{itemize}
           %  \item{
         $\textbf{R'3}$: $\exists q,q' \in N_{p}$, $IN_p(q)=(m,d,c)$ $\wedge$ $d\ne p$ $\wedge$ $Next_p(d)=q'$  $\wedge$ $q'\ne q$  $\wedge$ [$OUT_{p}(q')=\epsilon$ $\vee$ $OUT_{p}(q')=IN_{q'}(p)$] $\wedge$ $OUT_q(p)\ne IN_p(q)$ $\wedge$ $NO-Token$ $\rightarrow$ $OUT_{p}(q'):=(m,d,choice(c))$, $IN_{p}(q):=OUT_{q}(p)$.\\
         
          $\textbf{R'4}$: $\exists$ $q,q' \in N_p$, $IN_p(q')=(m,d,c)$ $\wedge$ $OUT_{q'}(p)\ne IN_p(q')$ $\wedge$ [$OUT_p(q)=\epsilon$ $\vee$ $OUT_p(q)=IN_q(p)$] $\wedge$ $S_{pqo}=(id,pq'i,qpi,E,x+1)$ $\wedge$ $S_{pq'i}=(id,q'po,pqo,F,x)$ %$\wedge$ $\forall$ $k\in N_p$, $(S_{pki}\ne(id,kpo,pqo,E,z)$ $\wedge$ $z\ne x+2$) 
          $\rightarrow$ $OUT_p(q):=(m,d,choice(c))$, $IN_p(q'):=OUT_{q'}(p)$         \\
         
        % $\exists q,q' \in N_{p}$, $IN_p(q)=(m,d,c)$ $\wedge$ $Spqi=(id,qpo,pq'o,F)$ $\wedge$ $Spq'o=(id,pqi,q'pi,F \vee T)$ $\wedge$ ($OUT_p(q')=\epsilon$ $\vee$ $OUT_p(q'):=INq'(p)$) $\rightarrow$ $OUT_p(q'):=IN_p(q)$, $INp(q):=OUT_q(p)$.\\ 
         
         %$\exists q \in N_{p}$, $IN_p(q)=(m,d,c)$ $\wedge$ $d\ne p$ $\wedge$ $Next_p(d)=q'$  $\wedge$ $q'=q$  $\wedge$ $choice_p(Next)=k$ $\wedge$ [$OUT_{p}(k)=\epsilon$ $\vee$ $OUT_{p}(k)=IN_{k}(p)$] $\rightarrow$ $OUT_{p}(k):=(m,d,choice(c))$, $IN_{p}(q):=OUT_{q}(p)$.\\

         \item{\textbf{Message transmission from $q$ to $p$ }} %%%%%%%%%%

  $\textbf{R'5}$: $IN_{p}(q)=\epsilon$ $\wedge$ $OUT_{q}(p)=(m,d,c)$ $\wedge$ $q\ne d$ $\wedge$ $NO-Token$ $\rightarrow$ $IN_{p}(q):=OUT_{q}(p)$.\\ 
  
  $\textbf{R'6}$: $ \exists q\in N_p$, $IN_p(q)=\epsilon$ $\wedge$ $OUT_{q}(p)=(m,d,c)$ $\wedge$ $q\ne d$ $\wedge$ $S_{pqi}=(id,qpo,?,E,x+1)$ $\wedge$ $S_{qpo}=(id,?,pqi,E,x)$ $\rightarrow$ $IN_p(q):=OUTq(p)$\\
 
 % $\exists q \in N_{p}$, $IN_p(q)=\epsilon$ $\wedge$ $OUT_q(p)$ $\ne$ $\epsilon$ $\wedge$ $Sqpo=(id,?,pqi,F)$ $\wedge$ $Spqi=(id,qpo,?,F)$ $\rightarrow$ $IN_p(q)=OUTq(p)$\\
         
      % $\exists q \in N_{p}$,  $OUT_{q}(p)=(m,d,c)$  $\wedge$ $d\ne q$ $\wedge$ $IN_{p}(q)=\epsilon$  $\rightarrow$ $IN_{p}(q):=OUT_{q}(p)$.\\
             
         \item{\textbf{Erasing a message after its transmission }}%%%%%%%%%
         
$\textbf{R'7}$: $\exists q \in N_{p}$, $OUT_{p}(q)=IN_{q}(p)$ $\wedge$ ($\forall q' \in N_{p}\setminus \{q\}$,
$IN_{p}(q')=\epsilon$ $\vee$ $(IN_{p}(q')=(m,d,c) \wedge Next_p(d)\ne q$)) $\wedge$ $NO-Token$ $\rightarrow$ $OUT_{p}(q):=\epsilon$ \\

       %  $\exists q \in N_{p}$, $OUT_{p}(q)=IN_{q}(p)$ $\wedge$ ($\forall q' \in N_{p}\setminus \{q\}$,[$IN_{p}(q')=\epsilon$  $\vee$ ($IN_p(q')=(m,d,c)$ $\wedge$ $Next_p(d) \ne q$) $\vee$ $choice_p(Next) \ne q$]) $\rightarrow$ $OUT_{p}(q):=\epsilon$.\\
            
         \item{\textbf{Erasing a message after its transmission (For the leaf processors) }}%%%%%%%%%
        
 $\textbf{R'8}$: $N_{p}=\{q\}$ $\wedge$ $OUT_{p}(q)=IN_{q}(p)$ $\wedge$ ($IN_{p}(q)=\epsilon$ $\vee$ $(IN_{p}(q)=(m,d,c) \wedge Next_p(d)\ne q$)) $\wedge$ $NO-Token$  $\rightarrow$ $OUT_{p}(q):=\epsilon$\\        
        
           %    $N_{p}=\{q\}$ $\wedge$ $OUT_{p}(q)=IN_{q}(p)$ $\wedge$ $IN_{p}(q)=\epsilon$ $\wedge$ $EXT_p=\epsilon$ $\rightarrow$ $OUT_{p}(q):=\epsilon$.\\
             
        \item{\textbf{Road change}}%%%%%%%%%%%%%%%
       
$\textbf{R'9}$: $\exists$ $q \in N_p$, $IN_p(q)=(m,d,c)$ $\wedge$ $Next_p(d)=q$ $\wedge$ $OUT_p(q)=\epsilon$ $\vee$ $OUT_p(q)=IN_q(p)$ $\rightarrow$ $OUT_p(q):=IN_p(q)$, $IN_p(q):=OUT_q(p)$\\

$\textbf{R'10}$: $\exists$ $q \in N_p$, $IN_p(q)=(m,d,c)$ $\wedge$ $Next_p(d)=q$ $\wedge$ $OUT_p(q) \ne \epsilon$ $\wedge$ $OUT_p(q)\ne IN_q(p)$ $\wedge$ $EXT_p=\epsilon$ $\wedge$ $\nexists$ $q'\in N_p$, $S_{pq'i}=(id,?,pq'o,?,0)$ $\rightarrow$ $Token_p(q):=true$\\

%$Token_p(q)$ $\wedge$ $EXT_p=\epsilon$ $\wedge$ $Sqpo=(id,?,pqi,V)$ $\wedge$ $Spqo=(id,idle,qpi,V)$ $\rightarrow$ $EXT_p:=IN_p(q)$, $IN_p(q)=OUT_q(p)$\\

%$\textbf{R'11}$: $\exists$ $q,q' \in N_p$ $EXT_p=\epsilon$ $\wedge$ $Sqpo=(id,?,pqi,F)$ $\wedge$ $Spqi=(id,qpo,pq'o,T)$ $\wedge$ $IN_p(q)=(m,d,c)$ $\rightarrow$ $EXT_p:=IN_p(q)$, $IN_p(q):=OUT_q(p)$\\

$\textbf{R'11}$: $\exists$ $q\in N_p$, $S_{pqi}=(id,idle,pqo,F,0)$ $\wedge$ $S_{qpo}=(id,?,pqi,F,x)$ $\wedge$ $x\geq 3$ $\wedge$ $S_{pqo}=(id,pqi,qpi,F,1)$ $\wedge$ $EXT_p= \epsilon$ $\rightarrow$ $EXT_p:=IN_p(q)$, $IN_p(q):=OUTq(p)$\\

$\textbf{R'12}$: $Small_p=q$ $\wedge$ $\exists$ $q'\in N_p$, $S_{pqi}=(id,qpo,pq'o,F,x)$ $\wedge$ $S_{pq'o}=(id,X,q'pi,F,z)$ $\wedge$ $x\ne pqi$ $\wedge$ $z\ne x+1$ $\wedge$ $S_{qpo}=(id,?,pqi,F,x-1)$ $\wedge$ $\nexists$ $q''\in N_p$, ($S_{pq''i}=(id',NULL,Z,F,0)$ $\wedge$ $S_{Z}=(id',pq''i,?,F,1)$) $\rightarrow$ $EXT_p:=IN_p(q)$, $IN_p(q):=OUTq(p)$\\

%$\exists$ $q,q' \in N_p$ $EXT_p=\epsilon$ $\wedge$ $Sqpo=(id,?,pqi,F)$ $\wedge$ $Spqi=(id,qpo,pq'o,T)$ $\wedge$ $IN_p(q)=(m,d,c)$ $\rightarrow$ $EXT_p:=IN_p(q)$, $IN_p(q):=OUT_q(p)$\\

$\textbf{R'13}$: $\exists$ $q \in N_p$ $S_{pqi}=(id,NULL,pqo,E,0)$ $\wedge$ $S_{pqo}=(id,pqi,qpi,F,1)$ $\wedge$ $S_{qpi}=(id,pqo,?,E,2)$ $\wedge$ $EXT_p \ne \epsilon$ $\wedge$ ($OUT_p(q)=\epsilon$ $\vee$ $OUT_p(q)=IN_q(p)$) $\rightarrow$ $OUT_p(q):=EXT_p$, $EXT_p:=\epsilon$ \\

$\textbf{R'14}$: $\exists$ $q,q' \in N_p$ $S_{pq'i}=(id,q'po,pqo,E,x)$ $\wedge$ $S_{pqo}=(id,X,qpi,F,z)$ $\wedge$ $X \ne pq'i$ $\wedge$ $z\ne x+1$ $\wedge$ $S_{qpi}=(id,pqo,?,E,z+1)$ $\wedge$ $EXT_p \ne \epsilon$ $\wedge$ ($OUT_p(q)=\epsilon$ $\vee$ $OUT_p(q)=IN_q(p)$) $\rightarrow$ $OUT_p(q):=EXT_p$, $EXT_p:=\epsilon$ \\

 \item{\textbf{Correction Rules}}%%%%%%%%%%%%%%%
 
 $\textbf{R'15}$: $EXT_p \ne \epsilon$ $\wedge$ ($NO-Token$ $\wedge$ ($\forall$ $q \in N_p$, $S_{pqi}\ne(id,qpo, ?, E)$) $\wedge$ ($\exists$ $q$ $\in$ $N_p$, $S_{pqi}=(id,NULL,pqo, E, 0)$ $\wedge$ $S_{pqo}=(id,pqi,qpi, E, 1)$ $\wedge$ $OUT_p(q)\ne \epsilon$ $\wedge$ $OUT_p(q)\ne IN_q(p)$) $\rightarrow$ $EXT_p:=\epsilon$\\
 
 $\textbf{R'16}$: $EXT_p \ne \epsilon$ $\wedge$ ($NO-Token$ $\wedge$ ($\forall$ $q \in N_p$, $S_{pqi}\ne(id,qpo, ?, E)$) $\wedge$ ($\exists$ $q, q'$ $\in$ $N_p$, $S_{pq'i}=(id,?,pqo, E,x)$ $\wedge$ $S_{pqo}=(id,X,qpi, E, z)$ $\wedge$ $X \ne pq'i$ $\wedge$ $z\ne x+1$ $\wedge$  $OUT_p(q)\ne \epsilon$ $\wedge$ $OUT_p(q)\ne IN_q(p)$) $\rightarrow$ $EXT_p:=\epsilon$\\

$\textbf{R'17}$: $Token_p(q)=true$ $\wedge$ $IN_p(q)=\epsilon$ $\vee$ $IN_p(q)=(m,d,c)$ $\wedge$ $Next_(d)\ne q$  $\rightarrow$ $Token_p(q)=false$\\

%  \item $\textbf{R23}$: $\forall$ $q \in N_p$, $S_{pqi}\ne (id,qpo,?,S \vee V \vee F,0)$ $\wedge$ $EXT_p\ne \epsilon$ $\rightarrow$ $EXT_p:=\epsilon$\\
    % $|N_{p}|=1$ $\wedge$  $IN_{p}(q)=(m,d,c)$ $\wedge$ $d \ne p$  $\wedge$ ($OUT_{p}(q)=\epsilon$ $\vee$ $OUT_{q}(p)= IN_{p}(q)$) $\rightarrow$ $OUT_{p}(q):=(m,d,choice(c))$, $IN_{p}(q):=OUT_{q}(p)$.\\
         
    %$|N_{p}|=1$ $\wedge$  $IN_{p}(q)=(m,d,c)$ $\wedge$ $d \ne p$  $\wedge$ $OUT_{p}(q)\ne\epsilon$ $\wedge$ $OUT_{q}(p)\ne IN_{p}(q)$ $\rightarrow$ $Request_p(TC)=true$.\\
    
     %$|N_{p}|=1$ $\wedge$  $IN_{p}(q)=(m,d,c)$ $\wedge$ $d \ne p$  $\wedge$ $OUT_{p}(q)\ne\epsilon$ $\wedge$ $OUT_{q}(p)\ne IN_{p}(q)$ $\wedge$ $EXT_p=\epsilon$ $\wedge$ $S_p=(q,1)$ $\wedge$ $S_q=(p,1)$ $\rightarrow$ $EXT_p:=(m,d,c)$, $IN_p(q)=OUT_q(p)$.\\

  \end{itemize}
  \end{scriptsize}
  \end{algorithm}

\subsection{Proof of correctness}\label{subsec:Proofs}
We prove in this section the correctness of our algorithm. The idea of the proofs is the following: we first show that no valid message is deleted from the system unless it is delivered to its destination. We then show that each buffer is infinitely often free, thus neither deadlocks nor starvation appear in the system. We finally show that every valid message is delivered to its destination once and only once in a finite time. Before detailing the proofs, let define some notions that will be used later.

\begin{definition}
Let $B_1$ and $B_2$ be two buffers and $p$, $q$ and $q'$ be processors in the network such that one of those properties holds:
\begin{itemize}
\item $\exists$ $p, q, q'$ such as $B_1=IN_p(q)$ $\wedge$ $B_2=OUT_p(q')$
\item $\exists$ $p, q$ such as $B_1=OUT_p(q)$ $\wedge$ $B_2=IN_q(p)$
\end{itemize}

$B2$ is called the successor of $B1$ denoted by $B_1\mapsto B_2$ if and only if $S_{B_1}=(id,?,B_2,?,x)$ $\wedge$ $S_{B_2}=(id,B_1,?,?,x+1)$
\end{definition}

%\begin{definition}
%A sequence of buffer $B_1, B_2, ..., B_k$ is called a path if and only if $k>0$ the three conditions holds:
%\begin{itemize}
%\item $\exists p, q,$ $B_1=IN_p(q)$ $\wedge$ $B_{2}=OUT_p(q)$ $\wedge$ $S_{B_{1}}=(p, NULL, B_{2}, ?, 1)$ 
%\item $\forall i$, $2\leq i \leq k-1$, $B_i\mapsto B_{i+1}$
%\item $S_{B_{k}}=(p, B_{k-1}, B_{1} \wedge B_j, ?, y)$ ($\wedge$ $B_j=IN_p(q)$ $\wedge$ $4\leq i \leq k-1$)
%\end{itemize}
%
%\end{definition}

\begin{definition}
A sequence of $k$ buffers $B_1\mapsto B_2 \mapsto ... \mapsto B_k$ starting from $B_1$ is called an abnormal sequence if the following property holds:

$S_{B_1}=(id,?,?,?,?)$ $\wedge$ ($B_1=IN_p(q)$ $\vee$ $B_1=OUT_p(q)$) $\wedge$ $id\ne p$

\end{definition}

%\begin{definition}
%We say that a configuration contains an Abnormal cycle $B_1, B_2, ..., B_k$ if $B_1\mapsto B_2 \mapsto ... \mapsto B_k$ holds such as $S_{B_k}=(id, B_{k-1},B_1,?,z)$, $S_{B_1}=(id, NULL,B_2,?,x)$ and $z\ne x+1$ and $\forall$ $1\leq i \leq k$, if $B_i$ is a buffer of $p$ then $S_{B_i}\ne(p,?,?,?,?)$.
%\end{definition}

%\begin{lem}\label{R13}
%if the Rule $R'13$ is enabled in $p$ then the message in $EXT_p$ is an invalid message.
%\end{lem}

A buffer $B$ is said to be cleared if $S_B=(-1,NULL,NULL,C,-1)$. In the same manner, a sequence is
said to be cleared, if all the buffers part of it becomes cleared in a finite time. 

Let us state the following lemma:

%%%%%%%%%%%%%%%%%%%%%%%%%%%%%%      Abnormal Sequence        %%%%%%%%%%%%%%%%%%%%%%%%%%%%%%%%%%%%%%%%%%%%%%%%%%%%%
\begin{lem}\label{AbnSequence}
If the configuration contains an abnormal sequence $S_1$ of buffers $B_1 \mapsto B_2 \mapsto ...\mapsto B_k$, then $S_1$ will be cleared in a finite time.
\end{lem}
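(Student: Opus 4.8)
The plan is to drive the argument with a single well-founded measure coming from the integer field of the buffer states, and then to run two waves along the sequence: a \emph{freeze} wave forward and a \emph{clean} wave backward. First I would record the structural fact that makes the sequence tractable: by the definition of $\mapsto$, if $B_i \mapsto B_{i+1}$ then $S_{B_i}=(id,?,B_{i+1},?,x)$ and $S_{B_{i+1}}=(id,B_i,?,?,x+1)$, so the last field strictly increases along the chain and all the $B_i$ share the same $id$. Consequently the buffers of the chain are pairwise distinct, the chain is simple and acyclic, and its length $k$ is bounded by the number of buffers of the network; this bounded, monotone quantity is the termination measure for everything that follows. I also read $B_1$ as the head of the sequence, i.e. as a buffer admitting no $\mapsto$-predecessor, which is forced if the statement is to hold, since a buffer with a valid predecessor may well belong to a healthy circulation. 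The hypothesis $id\ne p$ then says precisely that $B_1$ is \emph{not} a legitimate initiator, because the only initiation rule \textbf{R1} creates a head whose $id$ equals its own host and whose state is $(p,NULL,pqo,S,0)$.

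Second, I would show that $B_1$ reaches the freeze phase $G$ in finite time. Since $B_1$ is a head, the buffer addressed by its \emph{previous} pointer does not map onto it: it is cleared, carries a foreign $id$, or carries the same $id$ with an incompatible phase or $x$-value. Each of these is exactly a guard of one of the freeze-initiation rules \textbf{R26}--\textbf{R32} (\textbf{R26}/\textbf{R27} for a cleared or foreign predecessor of an output/input buffer, and \textbf{R28}--\textbf{R32} for the same-$id$ mismatches of initiator field, phase, and counter), so one of them becomes enabled and sets $S_{B_1}$ to phase $G$. The hard part will be here: I must rule out that $B_1$ is indefinitely \emph{repaired}, i.e. that a search rule such as \textbf{R2} keeps handing it a fresh predecessor and thereby demoting it from being a head before the freeze fires. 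I would close this precisely with the measure of the first step: every backward capture of a buffer lengthens a simple, length-bounded chain and strictly advances the frontier, so only finitely many such repairs are possible; moreover the priority conditions in the guards (the $id''\le id$ tests) forbid an abnormal chain from repeatedly seizing buffers of equal or higher priority. Under the weakly fair daemon, the genuinely unanchored head therefore freezes after finitely many steps.

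Third, I would propagate the freeze forward and then clean backward. Once a buffer is in phase $G$, rules \textbf{R33} and \textbf{R34} put its genuine successor into $G$ as well; since every forward token rule requires the predecessor to be in phase $S$, $V$ or $F$, phase $G$ is absorbing, and by induction on position together with weak fairness the whole finite chain reaches $G$. For the cleaning wave, the tail $B_k$ has by maximality no valid successor, so the buffer its next pointer addresses is cleared, foreign, or counter-inconsistent, which matches a guard of \textbf{R35}/\textbf{R36} (with \textbf{R37}/\textbf{R38} and \textbf{R40} covering the $x$-mismatch cases); hence $B_k$ is set to $(-1,NULL,NULL,C,-1)$. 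Clearing then cascades from $B_k$ down to $B_1$, because each $B_i$ in phase $G$ becomes eligible for \textbf{R35}/\textbf{R36} as soon as $B_{i+1}$ is cleared. Finally I would note that a cleared buffer cannot re-enter $S_1$: the only way to leave state $(-1,NULL,NULL,C,-1)$ is re-initiation by \textbf{R1}, which stamps the host's own $id$ and can therefore only start a fresh, normal circulation. Assembling the three steps and invoking weak fairness yields that every buffer of $S_1$ is cleared in finite time, which is the claim.
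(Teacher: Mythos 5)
Your proof follows the same route as the paper's: freeze-initiation at the unanchored head $B_1$ via R26--R32, forward propagation of phase $G$ via R33/R34, and backward cleaning from the tail $B_k$ via R35/R36; it is correct, and your explicit termination measure and your treatment of the possibility that $B_1$ acquires a fresh predecessor are refinements the paper leaves implicit. The one case the paper mentions that you pass over is a lower-id circulation seizing a buffer in the \emph{middle} of $S_1$, splitting it into two sub-abnormal sequences (so the single forward/backward wave does not traverse the whole chain) --- but your bounded-length and priority argument absorbs this, since each fragment is again an abnormal sequence handled by the same waves.
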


\begin{proof}
Since $S_1$ is an abnormal sequence. There is one processor $p$ that has sent a token which it did not receive from any other processor ($p$ is not the initiator). This processor is the one with the buffer $B_1$. Note that $p$ will be able to detect such a situation and either $R26$ or $R27$ will be enabled on $p$. When $p$ executes either $R26$ or $R27$ the Freeze cleaning phase is initiated, thus, $S_{B_1}=(id, NULL, B_2, G,x)$. Either $R33$ or $R34$ becomes enabled on the process that has $B_2$ as a buffer. Once one of these two rules is executed $S_{B_2}=(id, B_1, B_3, G,?)$, $B_3$ will set also its state to the Freeze cleaning phase, and so on. Thus all the buffers that are in the sequence $B_1 \mapsto B_2 \mapsto ...\mapsto B_k$ will be in the freeze phase $G$. Note that on the process $p'$ that has $B_k$ as a buffer (note that $B_k$ is the last buffer of the sequence), either $R35$ or $R36$ is enabled on $p'$. Once $p'$ executes one of these rules $B_k$ is cleared. $R35$ or $R36$ becomes then enabled on the process that has $B_{k-1}$ as a buffer. Thus when one of these rules is executed $B_{k-1}$ is cleared as well and so on. Thus we are sure that after a finite time each buffer that is in $S_1$ will clear its state and the Lemma holds. Note that the sequence $S_1$ can be broken (another token circulation with a smallest identifier can use one of the buffer of $S_1$). Note that in this case $S_1$ is divided in two sub abnormal sequence. Each sub abnormal sequence will behave on its own. Thus the buffers in each sub abnormal sequence will be cleared in a finite time. Observe that if in the sequence $S_{B_k}=(id, B_{k-1},B_1,?,z)$ and $S_{B_1}=(id, B_k,B_2,?,x)$ then we are sure that $z\ne x+1$. In this case too, the processor having $B_1$ as a buffer will be able to detect such a situation and initiates the freeze cleaning phase as previously. Thus the lemma holds.
\end{proof}

Let $p$, $q$ and $q'$ be processors such as $q$, $q'$ $\in N_p$, we state the following Lemma:

%%%%%%%%%%%%%%%%%%%%%%%%%%%%%%%    EXT free when Spqo=E   %%%%%%%%%%%%%%%%%%%%%%%%%%%%%%%%%%%%%%%%%%%%%%%%%%%%%
\begin{lem}\label{EndFreeBuffer}
If a valid message $m$ is copied in $EXT_p$ from $IN_p(q)$ in order to be copied later in $OUT_p(q')$, then when $S_{pq'o}=(id,?,?,E,?)$, $EXT_p$ is free.
\end{lem}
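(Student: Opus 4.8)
The plan is to prove the statement as a safety invariant that tracks the distinguished message $m$ from the instant it is written into $EXT_p$ until it is copied into $OUT_p(q')$, and to show that the token state $S_{pq'o}$ cannot be in phase $E$ while $m$ still occupies $EXT_p$. The first step is to pin down the only rules that move $m$ in and out of $EXT_p$. The message is written into $EXT_p$ exactly by the escort-initiation step R'11 (Full-Cycle) or R'12 (Sub-Cycle), each of which fires together with its token-circulation companion (R13, resp. R14) that puts the originating input buffer into phase $E$; crucially, both R'11 and R'12 require $EXT_p=\epsilon$, so once $m$ sits in $EXT_p$ no further write to $EXT_p$ is possible and $m$ cannot be overwritten. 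Symmetrically, $EXT_p$ is emptied only by the copy-out rule R'13 (Full-Cycle) or R'14 (Sub-Cycle), whose guards demand that $OUT_p(q')$ be in phase $F$ and free, i.e. $OUT_p(q')=\epsilon \vee OUT_p(q')=IN_{q'}(p)$. I would record at once that neither R'13 nor R'14 touches $S_{pq'o}$, so immediately after the copy-out $EXT_p$ is already free while $S_{pq'o}$ is still in phase $F$.

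Second, I would characterise how $S_{pq'o}$ can reach phase $E$ at all: on an output buffer the only transition into $E$ is the $F\to E$ escort step, namely R19 for the initiator's output in the Full-Cycle and the generic rule R17 otherwise, and both fire only when the successor input buffer already carries phase $E$ with the matching value $x+1$. The heart of the argument is to show that this guard becomes satisfiable only once the free slot escorted along the confirmed token path has actually reached $OUT_p(q')$, i.e. only once $OUT_p(q')$ is free and the copy-out can have occurred. For this I would exploit the value field: along a confirmed path consecutive buffers carry consecutive values $x, x+1, \dots$, and the validation/confirmation phases together with Lemma~\ref{AbnSequence} guarantee that this counter stays consistent and that no competing token circulation can capture a buffer already on the confirmed path. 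Hence the ``successor in phase $E$ with value $x{+}1$'' conjuncts appearing in R17/R19 are faithful witnesses of how far the free slot has progressed, and in particular of the freeness of $OUT_p(q')$.

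With these two ingredients I would argue by contradiction. Suppose a reachable configuration has $S_{pq'o}$ in phase $E$ while $EXT_p$ is not free; since R'11/R'12 cannot refill $EXT_p$, it must still hold $m$, so the copy-out R'13/R'14 has not yet fired. Consider the step that last moved $S_{pq'o}$ into $E$: by the previous paragraph its guard forced $OUT_p(q')$ to be free while it was still in phase $F$, and $EXT_p$ was nonempty; but a configuration with $OUT_p(q')$ free, in phase $F$, and $EXT_p$ nonempty is exactly one in which R'13 (resp. R'14) is also enabled. Since the composed distributed daemon makes a selected processor execute the enabled actions of every protocol in the same atomic step, R'13/R'14 fires in the very step that drives $S_{pq'o}$ into $E$, emptying $EXT_p$ — contradicting that $m$ remains in $EXT_p$ afterwards. (Any earlier emptying of $EXT_p$, e.g. by a correction rule R'15/R'16 whose guard presupposes $NO-Token$ or an inconsistent escort pattern, only strengthens the conclusion and cannot occur along the genuine confirmed escort that carries $m$.)

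I expect the second paragraph to be the main obstacle. What must be proved rigorously is that the $F\to E$ transition of $OUT_p(q')$ cannot be enabled strictly before the escorted free slot reaches it — equivalently, that $OUT_p(q')$ is provably free (its own message already forwarded to its successor) at the instant it is still in phase $F$ and $EXT_p$ is copied out. This is delicate precisely because it forces a careful reconciliation of the phase bookkeeping: one has to show that the free-slot arrival and the phase-$E$ marking happen in the correct order, so that a phase-$E$ transition never ``outruns'' the free slot. A premature phase-$E$ transition is exactly the scenario that would strand $m$ in $EXT_p$, and ruling it out rests entirely on the consistency of the value field and the non-interference of the confirmed path guaranteed by Lemma~\ref{AbnSequence} and the confirm phase.
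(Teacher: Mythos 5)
Your overall strategy matches the paper's: the lemma holds because the escort marking and the message forwarding are forced to advance together, the composed daemon executing all enabled actions of both protocols in one atomic step so that $R19$ (resp.\ $R17$) and $R'13$ (resp.\ $R'14$) fire simultaneously. You also correctly isolate the rules that fill and empty $EXT_p$ and the only $F\to E$ transitions of an output buffer. The gap is in your third paragraph, where the contradiction rests on the claim that the step moving $S_{pq'o}$ into phase $E$ has a guard that ``forced $OUT_p(q')$ to be free while it was still in phase $F$.'' That is not true of the rules: the guards of $R17$ and $R19$ mention only the token states of $OUT_p(q')$ and of its successor input buffer, never the contents of $OUT_p(q')$. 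Freeness of $OUT_p(q')$ at that instant is precisely the invariant that has to be proved --- you flag it yourself in the last paragraph as ``the main obstacle'' and then leave it to a generic appeal to the value field and Lemma~\ref{AbnSequence}, which do not by themselves say anything about where the free slot is. As written, the argument assumes its crux.

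The paper closes exactly this hole with a forward induction along the confirmed path $T=B_1\mapsto\dots\mapsto B_k$ (full-cycle case, $B_1=IN_p(q)$, $B_2=OUT_p(q')$): when $m$ is moved to $EXT_p$ by the simultaneous execution of $R13$ and $R'11$, $B_1$ is free and in phase $E$; then, inductively, whenever an output buffer of $T$ is free, in phase $F$, with its successor in phase $E$, rule $R17$ marks it $E$ while it stays free, and the forwarding rule $R'4$ and the escort propagation $R18$ are then enabled together on the same processor, so the free slot and the phase-$E$ marking move one buffer backwards along $T$ in the same step. This lock-step invariant is what guarantees that when the marking finally reaches $B_2=OUT_p(q')$ via $R19$, that buffer is free, $R'13$ is enabled in the very same step, and $EXT_p$ is emptied atomically with the $F\to E$ transition. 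To make your contradiction go through you would need to state and prove this propagation invariant explicitly rather than derive freeness from the guard of the $F\to E$ rule, where it does not appear.
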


\begin{proof}
Since the message $m$ is copied in $EXT_p$, $m$ is in the wrong direction. $IN_p(q)$ containing $m$ is part of a complete token circulation $T$ \ie a token circulation that validated and confirmed all its path (Recall that no message can be generated in the presence of a token circulation (see Rule $R'1$) and, if an abnormal token circulation reaches $IN_p(q)$ after the generation of the message $m$, we are sure that the path of such a token will never be confirmed moreover all the buffers part of it will clear their state in a finite time (refer to Lemma \ref{AbnSequence})). To simplify the explanation let us define $T$ as follow: $T= B_1\mapsto B_2 \mapsto ... \mapsto B_k$. Note that $IN_p(q)$ (mentioned in the lemma) can be either $B_1$ (in the case of a full-cycle) or $B_k$ (in the case of a sub-cycle). In the following we will consider only the case of a full-cycle (the same reasoning holds for the sub-cycle case). We show that there is a synchrony between the forwarding and the token circulation algorithms. When the token circulation confirmed  all its path (all the buffer part of $T$ have their State attribute set at $E$), $R'11$ and $R13$ becomes enabled on $p$. Recall that in this case $p$ executes both of them, thus $m$ is copied in $EXT_p$, $B_1=B_K$ and $S_{B_1}=(id, NULL,B_2,E,0)$ ($B_K$ becomes a free buffer). $R17$ becomes enabled on the processor with the buffer $B_{k}$. When the rule is executed $S_{B_{k}}=(id, B_{k-1},B_1,E,x)$. Observe that $B_K$ is an output buffer whereas $B_{k-1}$ which is an input buffer. Both $R'4$ and $R18$ become enabled on the processor with the two buffers $B_k$ and $B_{k-1}$. When both rules are executed $S_{B_{k-1}}=(id, B_{k-2},B_k,E,x)$ and $B_{k-2}=B_{k-1}$. Note that the same situation as the first one appear. We can observe that when an output buffer $B$ part of $T$ is free with the state $S_B=(id,?,?,F,z)$, $R17$ is enabled on the processor $p'$ with the buffer $B$, thus the state of $B$ will be set to $S_B=(id,?,?,E,z)$ and notice that $B$ remains free. Thus on $p'$ two rules will be enabled (the internal transmission ($R'4$) and the propagation of the escort phase ($R18$)), when both are executed we retrieve the same situation with another empty output buffer, and so on. Hence we are sure that on the processor $p$, $R'13$ and $R19$ will be enabled at the same time. When both rules are executed, $EXT_p$ is free and $S_{B_2}=(id,B_1,B_3,E,1)$ where $B_2$ refers to $OUT_p(q)$ and the lemma holds.

\end{proof}

%%%%%%%%%%%%%%%%%%%%%%%%%%%%%%   No message is deleted    %%%%%%%%%%%%%%%%%%%%%%%%%%%%%%%%%%%%%%%%%%%%%%%%%
	
We can now detect in some cases if the message in the extra buffer is invalid (it was in the initial configuration). Note that the algorithm deletes a message only in such cases (when we are sure that the message in the extra buffer is invalid), refer to Rules $R'15$ and $R'16$. Thus we have the following Theorem:

\begin{theorem}\label{NoMsgDeleted}
No valid message is deleted from the system unless it is delivered to its destination.
\end{theorem}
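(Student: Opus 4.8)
The plan is to prove the theorem by an exhaustive inspection of the forwarding rules, isolating every rule whose statement can remove the contents of a buffer that holds a message, and then showing that each such removal either delivers the message or leaves a copy elsewhere in the buffer graph --- with the sole exception of the two correction rules, which I would handle separately using Lemma~\ref{EndFreeBuffer}.

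First I would classify the removals. The movement and transmission rules ($R'3$, $R'4$, $R'5$, $R'6$, $R'9$, and $R'11$ through $R'14$) all write the message into a successor buffer in the same atomic step in which they overwrite the source, so in each case the guard together with the statement guarantees that the content is still present in the target buffer at the end of the step; hence none of them loses a message. The consumption rule $R'2$ empties $IN_p(q)$ only when its guard forces $d=p$, and it invokes $deliver_p(m)$ in the same step, so the message is removed \emph{because} it reaches its destination, which is exactly the behaviour the theorem allows. The erasure rules $R'7$ and $R'8$ empty an output buffer only under the guard $OUT_p(q)=IN_q(p)$, i.e. only once the message has already been copied into the successor buffer $IN_q(p)$; the erased copy is therefore redundant and the message survives.

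This leaves $R'15$ and $R'16$ as the only rules that can delete a message without delivering or forwarding it, since they both assign $EXT_p:=\epsilon$. The heart of the proof is to show that whenever $R'15$ or $R'16$ is enabled, the message then held in $EXT_p$ is invalid (present in the initial configuration). I would argue this by contraposition. A valid message can enter $EXT_p$ only through a legitimate escort step ($R'11$ or $R'12$), because those are the only rules that write a message into $EXT_p$, and their guards tie the copy to a confirmed token circulation $T$ whose role is to later deposit the message in the corresponding output buffer. Lemma~\ref{EndFreeBuffer} then applies and yields the synchrony between the forwarding and the token-circulation algorithms: by the time the relevant output buffer reaches the escort state $S_{pq'o}=(id,?,?,E,?)$, the buffer $EXT_p$ is already free. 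But the guards of $R'15$ and $R'16$ test precisely for a configuration in which the output buffer has reached the escort state ($S_{pqo}=(id,pqi,qpi,E,1)$ in $R'15$, and the analogous sub-cycle pattern in $R'16$) while $EXT_p$ is still nonempty and $OUT_p(q)$ is busy with a message distinct from the escorted one. For a validly escorted message this is impossible by Lemma~\ref{EndFreeBuffer}, so the message in $EXT_p$ cannot have arrived through a valid escort; it must be invalid, and deleting it loses no valid message.

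The main obstacle is this last step: matching the exact buffer-state patterns appearing in the guards of $R'15$ and $R'16$ against the conclusion of Lemma~\ref{EndFreeBuffer}, and verifying that the remaining conjuncts of those guards genuinely exclude every configuration in which a valid message could still be escorted out of $EXT_p$. In particular I would check that the predicate $NO$-$Token$ and the clause $\forall q\in N_p$, $S_{pqi}\ne(id,qpo,?,E)$ rule out any other escort still in progress that could yet empty $EXT_p$, so that the impossibility supplied by Lemma~\ref{EndFreeBuffer} really does cover the tested configuration rather than merely a later one. Once this case analysis is complete, combining it with the inspection of the other rules shows that no valid message is ever removed except upon delivery, which establishes the theorem.
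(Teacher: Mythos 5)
Your proposal is correct and follows essentially the same route as the paper: a rule-by-rule case analysis showing that every rule which overwrites or empties a buffer either delivers the message, writes a copy into a successor buffer in the same atomic step, or erases a copy that is already duplicated downstream, with $R'15$ and $R'16$ handled separately by invoking Lemma~\ref{EndFreeBuffer} to conclude that the message in $EXT_p$ must be invalid. Your treatment of the $R'15$/$R'16$ case is in fact somewhat more explicit than the paper's one-line appeal to that lemma, but the underlying argument is the same.
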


\begin{proof}
The proof is by contradiction: we first suppose that there is a message $m$ that is deleted without being delivered to its destination.\\
\begin{itemize}
\item By construction of $R'3$ and $R'4$, this cannot be a result of an internal forwarding since the message $m$ is first of copied in $OUT_p(q)$ before being erased from $IN_p(q′)$. Note that these two rules are enabled only if $OUTp_(q)=IN_q(p)$ or $OUT_p(q) =\epsilon$. Hence when the message $m$ is copied in the $OUT_p(q)$ no message is deleted.

\item By the construction of Rule $R'5$ and $R'6$, the message is only copied in $IN_p(q)$ and not deleted from the $OUT_q(p)$. Note that $IN_p(q)$ is empty. Thus no message is erased in this case. 

\item By the construction of rule $R'7$ and $R'8$, the message in $OUT_p(q)$ is deleted. However note that in this case $OUT_p(q) = IN_q(p)$. Thus there is still a copy in the system of the message erased.

\item By the construction of rule $R'9$, $R'13$ and $R'14$. The message is first copied in $OUT_p(q)$ (note that $OUT_p(q)$ is in this case empty) before being erased. 

\item The same holds for $R'11$ and $R'12$, the message in $IN_p(q)$ is first copied in $EXT_p$ (note that $EXT_p$ is in this case empty), before being erased. Thus there is still a copy in the system of such a message.  

\item Concerning $R'15$ and $R'16$, according to Lemma \ref{EndFreeBuffer}. If one of these rules ($R'15$ or $R'16$) is enabled in $p$ then we are sure that the message in $EXT_p$ is an invalid message. Thus when the processor $p$ executes on of them, no valid message is deleted.  \\

\end{itemize}
We can deduce from all the cases above that no valid message is deleted unless it is delivered to its destination, hence the lemma holds.

\end{proof}

%%%%%%%%%%%%%%%%%%%%%%%%%%%%%%%   EXT is free in a finite time   %%%%%%%%%%%%%%%%%%%%%%%%%%%%%%%%%%%%%%%%%%%%%%%%%%%

We now show in Lemma \ref{EXT} that the extra buffer of any processor $p$ cannot be
infinitely continuously busy (Recall that the extra buffer is used to solve the problem of
deadlocks). 

\begin{lem}\label{EXT}
If the extra buffer of the processor $p$  ($EXT_p$) contains a message, then this buffer becomes free after a finite time.
\end{lem}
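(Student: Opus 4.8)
The plan is to show that once $EXT_p$ holds a message, one of the four rules that can empty it ($R'13$, $R'14$, $R'15$, $R'16$) becomes enabled and fires within finite time. First I would observe that $EXT_p$ is written with a message only by $R'11$ and $R'12$, and that both effectively fire only when $EXT_p=\epsilon$, since they are synchronized with the escort–initiation rules $R13$/$R14$ whose guards require $EXT_p=\epsilon$. Consequently, the content of $EXT_p$ at the configuration of interest was either placed by an execution of $R'11$/$R'12$ during the current execution, or it was already present in the initial configuration. I would then split the argument along these two situations.

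For the first situation, the message was copied from an input buffer $IN_p(q)$ that, by the guards of $R'11$/$R'12$, belongs to a token circulation whose entire path is in the confirmed phase $F$ and therefore, as established in Lemma~\ref{EndFreeBuffer}, forms a complete cycle. I would then invoke Lemma~\ref{EndFreeBuffer} directly: its proof exhibits a synchrony between the escort rules ($R17$, $R18$, $R19$) and the forwarding rules ($R'4$, $R'6$) that drives the free slot around the finite cycle, so that the target output buffer $OUT_p(q')$ eventually reaches the state $S_{pq'o}=(id,?,?,E,?)$, at which point $EXT_p$ is free. Termination of this propagation follows from the finiteness of the buffer graph (the tree has $n$ processors, hence finitely many buffers) together with the weak fairness of the daemon, which guarantees that each continuously enabled escort/forwarding step is eventually taken. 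Thus $R'13$ (full-cycle) or $R'14$ (sub-cycle) becomes enabled and empties $EXT_p$.

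For the second situation, the message is invalid, and the token state surrounding $p$ is whatever the adversary installed initially. If that state does not form a complete confirmed cycle, then it is an abnormal sequence in the sense of Lemma~\ref{AbnSequence}, so it is cleared in finite time; once the spurious escort context has disappeared, the guard of $R'15$ or $R'16$ becomes satisfied, since these rules detect exactly the inconsistency of $EXT_p$ remaining busy while the escort has reached the initiator's completed configuration ($S_{pqi}=(id,NULL,pqo,E,0)$, $S_{pqo}=(id,pqi,qpi,E,1)$) yet $OUT_p(q)$ is still busy, a situation that is impossible for a valid message by Lemma~\ref{EndFreeBuffer}. The rule then empties $EXT_p$, and by Theorem~\ref{NoMsgDeleted} this deletes only an invalid message. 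If instead the initial token state happens to form a complete cycle, the escort simply runs as in the first situation and $R'13$/$R'14$ empties $EXT_p$. In every case $EXT_p$ becomes free in finite time.

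I expect the main obstacle to be the termination and synchrony argument for the escort propagation: one must be sure that the interleaving of the escort rules with the forwarding rules $R'4$ and $R'6$ along the cycle cannot stall or loop, and that no competing token circulation re-fills the buffers of the already validated path, which is precisely what the validation and confirmation phases were designed to prevent. A secondary delicate point is verifying that, after Lemma~\ref{AbnSequence} has cleared a spurious context, the configuration around $p$ is genuinely matched by the guard of $R'15$ or $R'16$, so that $EXT_p$ is never left busy in a state covered by none of the emptying rules.
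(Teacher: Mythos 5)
Your argument is correct and rests on the same three ingredients as the paper's own proof: Lemma~\ref{AbnSequence} to clear spurious token contexts, Lemma~\ref{EndFreeBuffer} for the escort/forwarding synchrony that frees the target output buffer, and the rules $R'13$--$R'16$ as the only ways to empty $EXT_p$. The only difference is organizational --- you split on the provenance of the message (placed by $R'11$/$R'12$ versus initial garbage) while the paper splits on the current token-circulation context of $p$'s buffers --- but the resulting sub-cases and their resolutions coincide, including the delicate point (which you flag and the paper handles in its third case) that a busy output buffer in the escort state must trigger deletion via $R'15$/$R'16$ rather than a copy.
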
  

\begin{proof}
Suppose that the extra buffer contains a message. The cases below are possible:
\begin{enumerate}
\item \label{NOE} There is no token circulation including an input buffer of $p$. In this case the message that is in the extra buffer $EXT_p$ is deleted by the processor $p$ by executing either $R'15$ or $R'16$ and the lemma holds.
\item \label{NOC}There is no $q$, $q'$ $\in N_p$ such that either $(i)$ $S_{pqi}=(id,NULL,pqo,State,0)$ and $S_{pqo}=(id,pqi,qpi,State',1)$ or, $(ii)$ $S_{pqi}=(id,qpo,pq'o,State,x)$ and $S_{pq'o}=(id,X,q'pi,State',z)$ and $z\ne x+1$ hold. In this case too the message that is in the extra buffer $EXT_p$ is deleted by the processor $p$ by executing either $R'15$ or $R'16$ and the lemma holds. Observe that if either $(i)$ or $(ii)$ holds for $q$, $q'$ $\in N_p$ such that $State\ne E$ and $State'\ne F$ then in this case too the message in $EXT_p$ is deleted.

\item if there exists $q$, $q'$ $\in N_p$ such that either $(i)$ $S_{pqi}=(id,NULL,pqo,E,0)$ and $S_{pqo}=(id,pqi,qpi,F,1)$ or, $(ii)$ $S_{pqi}=(id,qpo,pq'o,E,x) and S_{pq'o}=(id,X,q'pi,F,z)$ and $z\ne x+1$ holds then the following two sub-cases are possible:
        
        \begin{itemize}
        \item The token circulation is an abnormal sequence. In this case, we are sure that all the buffers part of it will clear their state (refer to Lemma \ref{AbnSequence}). Thus we retrieve case \ref{NOE}. % Problem with the phases is also cleared %
         \item The token circulation is a valid token circulation. In this case we are sure that the state of $OUT_p(q)$ (resp, $OUT_p(q')$) will be set at $(id,?,?,E,?)$. Thus if $OUT_p(q)$ (resp $OUT_p(q')$) is free then the message in $EXT_p$ is copied in it (refer to Rules $R'13$ and $R'14$). If it is not free the message in $EXT_p$ is deleted (refer to $R'15$ and $R'16$).
        \end{itemize}

\end{enumerate}
From the cases above, we can deduce that if $EXT_p$ is occupied then it will be cleared in a finite time.
\end{proof}

%%%%%%%%%%%%%%%%%%%%%%%%%   Validated token circulations will be cleared in a finite time    %%%%%%%%%%%%%%%%%%%%%%%%%%%%%%%%

\begin{lem}\label{TokenClear}
If there is a Token Circulation that validates all its sequence, then all the buffers part of it will clear their state in a finite time.
\end{lem}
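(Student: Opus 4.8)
The plan is to follow the token through its three remaining phases --- Confirm, Escort, and Clean --- and argue that each phase propagates along the (now validated) sequence $B_1 \mapsto \cdots \mapsto B_k$ and terminates, using the weak fairness of the daemon to turn ``enabled'' into ``eventually executed'' and invoking the earlier lemmas to control the coupling with the forwarding rules. The structural backbone is the invariant, bought precisely by the validation phase, that once a buffer belongs to a validated sequence no other token circulation can seize it: the guards of $R2$--$R7$ forbid reusing a buffer already carrying an identifier in phase $V$, $F$, or $E$ unless its own identifier is strictly smaller. Hence the sequence stays structurally intact while we push it through the subsequent phases, which is what lets the induction along $B_1 \mapsto \cdots \mapsto B_k$ go through at each stage.

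First I would show the Confirm phase completes. Validation having returned the token to the initiator, $R10$ is enabled there; by weak fairness it fires and sets the initiator's two buffers to phase $F$. I would then prove by induction along the sequence that $R11$ (on input buffers) and $R12$ (on output buffers) become and stay enabled on each successor buffer as soon as its predecessor reaches $F$, so confirmation propagates forward from $B_1$ to $B_k$ in finitely many steps. Consistency of the integer field $x$ --- each successor holding value $x+1$ --- together with the \emph{previous}/\emph{next} pointers guarantees that exactly the buffers of this sequence are converted, and that the correction rules $R26$--$R44$ are not triggered on a genuinely validated sequence.

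Second --- and this is where I expect the real difficulty --- I would show the Escort phase both initiates and completes. Depending on the case detected at the tip (full cycle via $R13$ with $R'11$, sub-cycle via $R14$ with $R'12$, or a free buffer found via $R15$/$R16$), the escorting message is first copied into $EXT_p$ and the tip buffer is switched to phase $E$; then $R17$--$R19$ propagate $E$ backward from $B_k$ toward $B_1$, while the forwarding rules $R'4$, $R'6$, $R'13$, $R'14$ actually slide the messages forward into the freed slot. The main obstacle is to establish that this backward propagation never stalls. Here I would lean on Lemma~\ref{EndFreeBuffer} to obtain the synchrony between the forwarding and token algorithms: it guarantees that each output buffer on the path is free exactly when its phase is set to $E$, so the accompanying forwarding rule is simultaneously enabled and the free slot is handed off one step closer to the initiator. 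Lemma~\ref{EXT} then ensures $EXT_p$ is emptied in finite time, so no processor stays blocked by a full extra buffer, and weak fairness drives the escort all the way back to the initiator.

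Finally, once escort has returned to the initiator --- its two buffers in phase $E$ with the matching values $0$ and $1$ --- $R20$ or $R21$ clears the first buffer, after which $R22$--$R25$ propagate the Clean phase, each buffer being reset to $(-1,NULL,NULL,C,-1)$ once its predecessor is cleared. As in the Confirm step, an induction along $B_1 \mapsto \cdots \mapsto B_k$ combined with weak fairness shows every buffer of the sequence is cleared in finite time, which is the claim. Should the purportedly validated sequence in fact be an abnormal remnant of the initial configuration that merged into a cycle, the $x$-inconsistency caught by $R37$/$R38$/$R40$ or the freeze-cleaning established in Lemma~\ref{AbnSequence} clears it instead; either way the sequence reaches the Clean state, and the lemma holds.
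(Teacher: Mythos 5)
Your proof is correct and follows essentially the same route as the paper's: the paper likewise argues phase by phase (Confirm initiated by $R10$ and propagated by $R11$/$R12$, Escort initiated at the initiator or at $B_k$ and propagated in reverse, then the Cleaning rules walking back along the sequence), relying on the fact that a validated path cannot be broken by another token circulation. Your version is in fact more detailed than the paper's, in particular in making explicit the use of Lemma~\ref{EndFreeBuffer} for the escort/forwarding synchrony and of weak fairness at each propagation step.
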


\begin{proof}
Let refer to the Token Circulation as $T=B_1\mapsto B_2 \mapsto ... \mapsto B_k$. Since the token circulation validates all its path then either it found a free buffer or detects a cycle. Note that ($i$) if its' non of these cases hold then the last buffer of the sequence $B_k$ will clear its state ($R41$ or $R42$ is executed). $B_{k-1}$ then does the same and so on. Otherwise, ($ii$) the Confirm phase is initiated by the initiator of $T$ and we can easily show that all the buffers part of $T$ will update their state to the confirm phase in a finite time. The escort phase is then initiated by either the initiator of $T$ (in the case of a full-cycle) or by the processor that has $B_k$ as a buffer. Observe that the escort phase progresses in the reverse sequence of $T$ when it reaches the initiator (in the case of a full cycle) the initiator initiates the cleaning phase by clearing $B_1$ (In the case of a sub-cycle the processor that detects the sub-cycle is the one that initiates the cleaning phase when its corresponding output buffer updates its state to the escort phase. For instance, in Figure \ref{Tokennn}, (c), the processor $q$ detects the cycle. $q$ initiates the cleaning phase when it updates the state of $OUT_q(r)$ to $(id,?,rqi,E,?)$). In the same manner $B_{k-1}$ will clear its state and so on. Thus in this case too we are sure that all the buffers part of $T$ will clear their state in a finite time. Observe that in the case $T$ found a free buffer the cleaning phase is initiated by the processor with the buffer $B_k$. $B_{k-1}$ clears then its state and so on. Thus we are sure that all the buffers part of $T$ will clear their state in a finite time and the lemma holds.
\end{proof}

%%%%%%%%%%%%%%%%%%%%%%%%   Token_p will be set at false in a finite time     %%%%%%%%%%%%%%%%%%%%%%%%%%%%%%%%%%%%%%%%%%%%

\begin{lem}\label{Tokenp}
In the case where $Token_{p}(q)=true$, it will be set at false in a finite time.
\end{lem}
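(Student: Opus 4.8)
The plan is to analyze the two rules that assign \textbf{false} to $Token_p(q)$, namely $R39$ (which fires when $S_{pqi}=(p,?,?,?,?)$, i.e.\ an input buffer carrying a token initiated by $p$ itself) and $R'17$ (which fires when the input buffer is empty or its message no longer routes through $q$), together with $R1$, the rule that actually consumes a raised token flag to launch a circulation. Observe first that $Token_p(q)$ and $S_{pqi}$ are variables owned by $p$, so no neighbour can modify them; this will let me turn ``enabled'' into ``eventually executed'' via weak fairness without worrying about the guard being falsified by another processor. I would split on whether the message that triggered the flag (recall $R'10$ raises it only when $IN_p(q)=(m,d,c)$ with $Next_p(d)=q$ and $OUT_p(q)$ is busy) stays in place.

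\emph{Case A: at some later configuration $IN_p(q)=\epsilon$ or $IN_p(q)=(m,d,c)$ with $Next_p(d)\ne q$.} Then the guard of $R'17$ holds. Since $p$ alone writes $IN_p(q)$'s message field, $Token_p(q)$ and the objects in the guard, $R'17$ remains continuously enabled until executed; by weak fairness $p$ is eventually selected and (executing all its enabled actions) runs $R'17$, setting $Token_p(q):=false$.

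\emph{Case B: from now on $IN_p(q)=(m,d,c)$ with $Next_p(d)=q$.} Here $R'17$ is never enabled, and $R39$ requires $S_{pqi}=(p,?,?,?,?)$, so $Token_p(q)$ can only be cleared after a $p$-initiated token lands on $IN_p(q)$; the only way that happens is through $R1$, which needs $S_{pqo}=S_{pqi}=(-1,NULL,NULL,C,-1)$. I would therefore argue that both token-states $S_{pqi}$ and $S_{pqo}$ reach the cleared value in finite time: any circulation currently occupying one of them is either an abnormal sequence, cleared by Lemma~\ref{AbnSequence}, or a circulation that validates its whole path, cleared by Lemma~\ref{TokenClear}; since the underlying tree is finite, a search phase cannot run forever without either finding a free buffer or detecting a cycle, so every occupation of these two buffers terminates in a cleared state. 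Throughout this phase $Token_p(q)$ stays \textbf{true} (neither $R'17$ nor $R39$ is enabled). Once $S_{pqi}=S_{pqo}=(-1,NULL,NULL,C,-1)$, rule $R1$ is enabled and, being guarded only by $p$'s own variables, stays enabled until $p$ is selected; executing it yields $S_{pqi}=(p,NULL,pqo,S,0)$. This immediately enables $R39$, whose guard again depends only on $p$'s own variables and hence persists; by weak fairness $p$ is selected once more and executes $R39$, giving $Token_p(q):=false$.

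The main obstacle is the clearing argument in Case~B: one must be certain that foreign token circulations traversing $S_{pqi}$ or $S_{pqo}$ cannot keep at least one of the two buffers perpetually busy by an endless relay of distinct tokens, and that the two buffers are simultaneously cleared long enough for $R1$ to be seen by the daemon. I would discharge this by invoking Lemmas~\ref{AbnSequence} and~\ref{TokenClear} for each occupying circulation and by noting that, since $Token_p(q)$ blocks all $NO\text{-}Token$--guarded forwarding rules at $p$ and $S_{pqi},S_{pqo}$ are $p$'s own variables, no action of $p$ re-busies these buffers while the flag is up except the desired $R1$; the remaining bookkeeping (that the relevant guards of $R1$ and $R39$ stay enabled until $p$ is chosen) is then routine given weak fairness and the ownership of the variables involved.
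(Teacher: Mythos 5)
Your proposal takes essentially the same route as the paper: the flag can only be cleared by $R'17$ or by $R39$ (the latter after $R1$ has fired), so you split on whether $IN_p(q)$ keeps holding a message routed through $q$, handle the first case by persistence of $R'17$ plus weak fairness, and handle the second by arguing via Lemmas~\ref{AbnSequence} and~\ref{TokenClear} that $S_{pqi}$ and $S_{pqo}$ are eventually cleared so that $R1$, and then $R39$, can fire. You are in fact more explicit than the paper on two points it glosses over: that $R1$ also needs $S_{pqo}$ cleared, and that $R39$ requires the token in $S_{pqi}$ to carry identifier $p$ (the paper asserts $R39$ fires whenever $S_{pqi}=(id,?,?,?,?)$, which is only true for $id=p$).

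There is, however, a concrete flaw in how you discharge what you yourself call the main obstacle of Case~B. The claim that ``no action of $p$ re-busies these buffers while the flag is up except the desired $R1$'' is false: the token-relay rules are actions of $p$, and $R2$, $R5$, $R7$ write $S_{pqi}$ on behalf of a foreign circulation arriving from $q$, while $R3$, $R4$, $R6$ write $S_{pqo}$. Likewise, $Token_p(q)=true$ does not block anything guarded by $NO\text{-}Token_p$, since that predicate is defined on the buffer states $S_{pqi}$, $S_{pqo}$, $S_{qpo}$ and not on the boolean flag. Consequently your ownership argument does not exclude the scenario you worry about, namely an unbounded succession of distinct foreign circulations keeping at least one of the two buffers busy so that $R1$'s guard never persists; ruling this out needs a separate counting or priority argument (e.g.\ on identifiers) that you do not supply. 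To be fair, the paper's own proof is silent on this point as well, so your attempt is no weaker than the published one here, but the step as you wrote it would not go through.
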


\begin{proof}
Note that in the case $Token_{p}(q)=true$ and the rule that allows the initiation of the token circulation is enabled, $Token_{p}(q)$ will be set at false by the token circulation algorithm when the this rule is executed. Otherwise, the two cases below are possible:
\begin{itemize}
\item $S_{pqi}=(id,?,?,?,?)$, in this case $Token_{p}(q)$ will be set at false by the Token Circulation algorithm by executing $R39$ (Note that $R39$ is enabled on $p$ and the daemon is weakly fair).
\item $S_{pqi}=(-1,NULL,NULL,C,-1)$. In the case the next processor by which the message that is in $IN_p(q)$ have to pass to reach the destination is $q$ then the rule that allows the initiation of the token circulation is enabled on $p$. Thus, $Token_{p}(q)$ will be set at false by the token circulation algorithm when the this rule is executed. Otherwise, $Token_{p}(q)$ will be set at false by executing $R'17$ that is enabled on $p$.
\end{itemize}

From the cases above we can deduct that in the case $Token_{p}(q)=true$, it will be set at false in a finite time and the lemma holds.
\end{proof}

%%%%%%%%%%%%%%%%%%%%%%%%%    Each processor will be able to generate a token circulation in a finite time   %%%%%%%%%%%%%%%%%%%%%%

\begin{lem}
If there is a processor that wants to generate a token circulation, it will be able to do it in a finite time.
\end{lem}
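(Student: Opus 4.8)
The plan is to read ``wants to generate a token circulation'' as the situation in which, at some processor $p$ and link $(p,q)$, the flag $Token_p(q)$ has been raised by rule $R'10$: there is a message $m$ in $IN_p(q)$ whose next hop is $q$ (so $Next_p(d)=q$), the corresponding output buffer is blocked ($OUT_p(q)\ne\epsilon$ and $OUT_p(q)\ne IN_q(p)$), $EXT_p=\epsilon$, and $p$ is not already initiating a token. ``Being able to do it'' means that rule $R1$ --- the unique rule that actually creates a token circulation --- eventually fires at $p$. Since the guard of $R1$ is $Token_p(q)\wedge S_{pqo}=(-1,NULL,NULL,C,-1)\wedge S_{pqi}=(-1,NULL,NULL,C,-1)$, the whole task reduces to showing that both $S_{pqi}$ and $S_{pqo}$ reach the cleared state while $Token_p(q)$ still holds, after which weak fairness of the daemon forces $R1$ to be executed.

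First I would argue that $S_{pqi}$ and $S_{pqo}$ become cleared in finite time. Whatever token state presently occupies these two buffers must belong to one of three categories. If it is part of an abnormal sequence, Lemma \ref{AbnSequence} guarantees that the sequence, hence both buffers, is set to $(-1,NULL,NULL,C,-1)$ in finite time. If it is part of a token circulation that validates its whole path, Lemma \ref{TokenClear} gives the same conclusion. Any remaining stale or inconsistent state is caught by the freeze-cleaning and correction rules $R26$--$R44$, which again drive the buffers to the cleared value. Thus after a bounded number of steps there is a configuration in which $S_{pqi}=S_{pqo}=(-1,NULL,NULL,C,-1)$.

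Next I would show that $Token_p(q)$ survives until $R1$ fires. The flag is reset only by $R1$ itself (the event we want), by $R39$, or by $R'17$. Rule $R39$ requires $S_{pqi}=(p,?,?,?,?)$, i.e. a token already emitted by $p$ on this very buffer, which cannot coexist with a genuine, not-yet-satisfied request to generate; and $R'17$ requires $IN_p(q)$ to stop holding a message whose next hop is $q$. Hence, as long as the blocking message remains in $IN_p(q)$ with $Next_p(d)=q$ --- i.e. as long as $p$ genuinely still wants the token --- neither $R39$ nor $R'17$ is applicable and $Token_p(q)$ stays true. Combining this with the previous paragraph yields a configuration where the guard of $R1$ is satisfied, and weak fairness then forces $R1$, so that the token is generated.

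The main obstacle is establishing that the guard of $R1$ is not merely satisfied once but stays (or recurs) long enough for the weakly fair daemon to select it. Because $S_{pqi}$ and $S_{pqo}$ are variables of $p$ and are written only by the token rules, $p$ itself will not re-occupy them while no token runs through $(p,q)$; the real threat is a concurrent token circulation launched by a neighbour, which could re-install a search-phase state on $S_{pqo}$ through rules such as $R3$, $R4$ or $R6$ before $R1$ is chosen. The delicate step is therefore to use the identifier-based priority between competing token circulations (a token with identifier $id$ may only seize a buffer held by one with larger identifier) together with Lemmas \ref{AbnSequence} and \ref{TokenClear} to bound the number of times the two buffers can be re-seized, so that a window in which $R1$ is continuously enabled is reached in finite time. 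Once that is shown, weak fairness completes the proof.
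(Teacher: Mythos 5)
Your proposal is correct in substance and follows essentially the same route as the paper: both arguments reduce the claim to the fact that the relevant token states are eventually cleared (Lemmas \ref{AbnSequence} and \ref{TokenClear}) and then invoke weak fairness to fire $R1$. Two differences are worth noting. First, you start from the point where $Token_p(q)$ has already been raised by $R'10$, whereas the paper implicitly starts earlier, from a processor that merely holds a blocked message; that is why the paper also cites Lemma \ref{EXT} ($EXT_p$ becomes free) and Lemma \ref{Tokenp} ($Token_p(q)$ is eventually reset) --- the guard of $R'10$ requires $EXT_p=\epsilon$ and the absence of a pending token state at $p$, so under that broader reading you would additionally need those two lemmas just to get $R'10$ to fire. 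Second, and to your credit, you explicitly isolate the step the paper silently elides: the guard of $R1$ must be continuously enabled (or recur often enough) for the weakly fair daemon to select it, even though neighbouring token circulations can re-occupy $S_{pqo}$ or $S_{pqi}$ via $R2$ and $R3$ in the meantime. You name this as the delicate point and sketch the identifier-priority argument needed to bound the number of such re-seizures, but do not carry it out; the paper's proof does not address it at all. Your write-up is therefore no less complete than the paper's, but neither text actually closes that fairness gap.
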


\begin{proof}
From Lemma \ref{Tokenp} we know that if $Token_p$ is true then it will be set at false in a finite time. From Lemma \ref{EXT} if $EXT_p$ is occupied, then it will be cleared in a finite time. From Lemma \ref{TokenClear} and Lemma \ref{AbnSequence} we know that if there is a token circulation that is executed all the buffers part of it will clear their state in a finite time. Thus when $p$ wants to generate a token circulation it will be able to do it in a finite time.
\end{proof}

\begin{lem}\label{TokenWin}
If there are some Token Circulations that are initiated then at least one of them will validate all its path.
\end{lem}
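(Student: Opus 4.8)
The plan is to exploit the total priority that the algorithm grants to small identifiers, together with the fact, established below, that as soon as a token reaches the Validation phase it necessarily validates its whole path. First I would discard the garbage left by the initial configuration: by Lemma~\ref{AbnSequence} every abnormal sequence is cleared in a finite time, so after a finite prefix of the execution the only token circulations present are those legitimately produced by $R1$, each carrying as identifier the unique identity of its initiator. Let $T$ be the legitimate token circulation whose identifier $id_T$ is the smallest; as identifiers are processor identities, they are pairwise distinct and taken from a finite set, so $id_T$ is well defined and remains minimal for as long as $T$ lives. I would then split on whether $T$ ever stays permanently blocked during its Search phase.

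The first key fact I would prove is that \emph{any} buffer that enters one of the phases $V$, $F$ or $E$ belongs to a token that validates all its path. Indeed a buffer turns to $V$ only at the instant its token reaches a free buffer ($R6$, $R7$) or detects a cycle, and from there the validation is propagated backwards by $R8$ and $R9$ up to the initiator. The crucial point is that the guards of the Search rules $R2$ and $R3$ forbid advancing into a buffer already in phase $V$, $F$ or $E$, so a validated buffer can never be stolen by any competing token (not even by one of smaller identifier); hence, the daemon being weakly fair, the backward validation is never interrupted and does reach the initiator. This settles at once the case where $T$ stays blocked: the smallest identifier can be stopped only by a buffer in phase $V$, $F$ or $E$, since a buffer that is free, or busy but carrying no token, or taken only by a token of strictly larger identifier still in Search, is always usable by $T$ through $R6$/$R7$ or through the stealing performed by $R2$/$R3$. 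Such a blocking buffer belongs, by the fact just proved, to a token that validates all its path.

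It remains to treat the case where $T$ is never permanently blocked. Here I would use the $x$ field as a progress measure: whenever $T$ reaches a buffer that is not free, rule $R2$ or $R3$ advances its head into the next buffer and increments $x$; whenever it reaches a free input or output buffer, rule $R7$ or $R6$ turns that buffer to $V$. Since, while token circulations are active, the forwarding rules that would move the underlying messages are disabled in their neighbourhood by the \emph{NO-Token} guard (rules $R'1$, $R'3$, \dots), the active buffer graph seen by $T$ does not change under its feet, so its head follows a fixed path; on a tree this path must, within at most as many steps as there are buffers, either meet a free buffer or return onto a buffer it already occupies, which is precisely the Full-cycle / Sub-cycle situation the algorithm detects. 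In every case $T$ reaches phase $V$ in a finite time, and by the first key fact it then validates all its path. Combining the two cases, at least one token circulation validates all its path.

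The step I expect to be the main obstacle is the first key fact, namely that a buffer in phase $V$/$F$/$E$ can neither be stolen nor spuriously cleared, so that its token validates its whole path: this requires checking carefully, against every disjunct of the guards of $R2$, $R3$ and of the correction rules, that no action can overwrite or clear a consistently validating legitimate token, so that $x$ is a genuinely monotone and bounded progress measure. A secondary difficulty is to confirm that the smallest identifier is blocked by nothing other than an already validated buffer, which again amounts to an exhaustive reading of the priority tests $id'' \le id$ appearing in $R2$ and $R3$ (a cleared state $id''=-1$ must be treated as ``no token'' rather than as a competitor).
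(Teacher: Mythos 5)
Your overall skeleton---single out the token with the smallest identifier, note that its buffers cannot be stolen, and conclude that it eventually enters and completes the Validation phase---is essentially the paper's argument, and your Case~B is sound. The genuine gap is your ``first key fact'': it is false as stated, and your Case~A (the permanently blocked case) rests entirely on it. A buffer enters phase $V$ by the backward propagation $R8$/$R9$ while other buffers of the \emph{same} token, closer to the initiator, are still in phase $S$; the guards of $R2$/$R3$ protect a buffer only against tokens of larger or equal identifier (and protect $V/F/E$ buffers absolutely), so an $S$-buffer of a partially validated token $T_2$ can still be stolen by a third token $T_3$ with $id_T < id_{T_3} < id_{T_2}$. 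This cuts $T_2$'s path; its already-validated suffix becomes an abnormal sequence and is cleared by Lemma~\ref{AbnSequence}, and $T_2$ never validates all its path. The paper's own proof explicitly enumerates this scenario (its case on a circulation $T_2$ cut by a $T_3$). So from ``$T$ is blocked by a $V/F/E$ buffer'' you cannot conclude that the blocker's token validates its whole path.

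The correct resolution of Case~A, which is what the paper does, is the opposite of what you claim: a blocking $V/F/E$ segment is always eventually \emph{cleared}---either because its token does validate everything and is then cleaned (Lemma~\ref{TokenClear}), or because it is cut and cleared as an abnormal sequence (Lemma~\ref{AbnSequence})---so the minimal-identifier token is never permanently blocked, Case~A is vacuous, and the lemma is obtained through Case~B alone. Note also that your key fact \emph{does} hold for the minimal-identifier token itself, since none of its $S$-buffers can be stolen; that restricted version is all the proof needs, and is exactly the induction the paper invokes at the end of its argument.
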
 

\begin{proof}
Let us focus on the token circulation that has the smallest $id$ (Let this token be $T1$). When such a token circulation is initiated, the only things that can stop its progression is the presence on the path of another token circulation $T2$ that is in the Valid phase. Thus the following cases are then possible: 
\begin{enumerate}
 \item\label{C1} i) $T2$ is a correct token circulation. In this case two sub cases are possible as follow: i) all the path of $T2$ has been validated. No other token circulation can break $T2$. Thus according to Lemma \ref{TokenClear}, we are sure that the state of all the buffers of the path will be clean in a finite time. Thus $T1$ can continue its progression. ii) There is another token circulation $T3$ that cut $T2$. Note that in this case there is a part of the path that has been broken. An abnormal sequence is then created (Note that the buffers that were part of $T2$ that are in the valid phase are part of the abnormal sequence). According to Lemma \ref{AbnSequence}, the state of the buffers of the sequence will cleared. Thus $T1$ can continue its progression. 
 \item ii) $T2$ is not a correct token circulation. In this case $T2$ is an abnormal sequence. In this case according to Lemmas \ref{AbnSequence}. The state of the buffers part of $T2$ will be cleared in a finite time. Thus $T1$ can continue its progression. Note that $T2$ can behave as a valid token circulation. In this case we retrieve case \ref{C1}. 
 
 In both cases $T1$ continues its progression. Thus we are sure that $T1$ will be able to reach the last buffer $B_i$ such as $B_i$ is either empty or it wants to send the token to a buffer that is already in the path of $T1$. Note that on the processor that contains $B_i$ either $R4$ or $R5$ or $R6$ or $R7$ are enabled. The second phase is then initiated (the state of $B_i$ will be valid). It is easy to prove by induction that all the buffer on the path of $T1$ will be validated since that we are sure that there is no other token circulation that can break $T1$ (Recall that $T1$ has the smallest $id$). Thus the lemma holds. 
\end{enumerate}
\end{proof}

We can then deduce that at least one message will undergo a route change. The next lemma follows:

%%%%%%%%%%%%%%%%%%%%%%%%%%%%%%%   All the message will be in a suitable buffer   %%%%%%%%%%%%%%%%%%%%%%%%%%%%%%%%%%%%%%%%%%

\begin{lem}\label{MSGSUITBUFF}
When the routing tables are stabilized all the messages will be in a suitable buffer in a finite time.
\end{lem}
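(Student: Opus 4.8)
The plan is to argue by a potential argument on the number of misdirected messages. First I would make precise the notion of a \emph{suitable buffer}: a valid message $(m,d,c)$ lying in an input buffer $IN_p(q)$ is in a suitable buffer exactly when $Next_p(d)\ne q$, i.e. its last hop did not overshoot the unique path toward $d$; a message in an output buffer $OUT_p(q)$ is suitable when $Next_p(d)=q$, which the generation rule $R'1$ and the forwarding rules $R'3,R'4$ enforce by construction. The whole argument hinges on the hypothesis that, once the routing tables are stabilized, they are correct and loop-free, so $Next$ singles out, at every node, the unique neighbour on the path to $d$ in the tree.

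Next I would establish invariance of suitability after stabilization. If a message sits in a suitable input buffer $IN_p(q)$ with $Next_p(d)=q'\ne q$, then $R'2$, $R'3$ and $R'5$ forward it to $OUT_p(q')$ and then to $IN_{q'}(p)$; since the tables are loop-free, $Next_{q'}(d)\ne p$, so the message again lands in a suitable buffer. By induction a suitable message remains suitable until delivery, and because $R'1$ always generates a fresh message directly into $OUT_p(Next_p(d))$, no new misdirected message is ever created after stabilization. Hence the set of misdirected messages can only shrink, and it is finite, being bounded by the number of occupied buffers.

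Then I would show that each misdirected message is corrected in finite time. Take $m=(m,d,c)$ in $IN_p(q)$ with $Next_p(d)=q$. If $OUT_p(q)$ is free ($=\epsilon$ or $=IN_q(p)$), the road-change rule $R'9$ is enabled and, by weak fairness, executes, copying $m$ into $OUT_p(q)$; rule $R'5$ (or $R'6$) then moves it to $IN_q(p)$. Since the route from $p$ to $d$ passes through $q$, in a tree the route from $q$ to $d$ cannot return through $p$, so $Next_q(d)\ne p$ and $m$ is now in a suitable buffer, where by the invariance above it stays until delivery; thus a single successful road change fixes $m$. The only obstruction is that $OUT_p(q)$ may be busy, in which case $R'10$ fires and launches a token circulation. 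Invoking Lemma~\ref{TokenWin} (some token validates its whole path), together with Lemma~\ref{EndFreeBuffer}, Lemma~\ref{TokenClear} and Lemma~\ref{EXT}, the escort phase frees the relevant output buffer (via $R'13$/$R'14$ and $R'4$) in finite time, after which $m$ proceeds as above; abnormal token sequences inherited from the initial configuration are removed by Lemma~\ref{AbnSequence} and so cannot block this indefinitely.

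Finally I would close the measure argument: the number of misdirected messages is finite, never increases after stabilization, and strictly decreases each time a road change completes; since every misdirected message is guaranteed to undergo a road change in finite time, after a finite time none remains, i.e. every message sits in a suitable buffer. The step I expect to be the main obstacle is the busy-output case: one must certify that the deadlock-breaking machinery not only terminates but actually liberates the \emph{particular} output buffer a given misdirected message needs (or at least one output buffer on the same active cycle, which still lowers the measure), and that several concurrent token circulations and escort phases shuffling free slots around cannot starve a fixed message. This is exactly where the earlier token-circulation lemmas must be assembled with care, leaning on the fact that the retracing of a misdirected message is monotone toward its correct branch point and therefore cannot oscillate.
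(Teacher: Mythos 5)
Your proposal follows essentially the same route as the paper's proof: a monotonicity argument showing the number of misdirected messages never increases after stabilization (since generation and forwarding always target the correct output buffer), combined with Lemmas~\ref{EXT}, \ref{TokenWin} and \ref{EndFreeBuffer} to show that each misdirected message eventually undergoes a road change, so the count strictly decreases to zero. If anything, your closing observation --- that the winning token circulation need not be the one serving the particular message $m$, but that any completed circulation still lowers the global measure --- is handled more carefully than in the paper, which simply assumes the completed circulation is the one initiated by $p$.
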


\begin{proof}
Note that when the routing tables are stabilized, some messages may be on the wrong direction, however, we are sure that the number of such messages will never increase since both the generation and the routing of messages is always done in the right buffer (Recall that the routing tables are stabilized). In another hand according to Lemma \ref{EXT}, if the extra buffer of $p$ ($EXT_p$) is occupied, it will be free in a finite time. Suppose that $p$ is the processor that has an input buffer that contains a message $m$ that is not in a suitable buffer. This process will initiates a token circulation. According to Lemma \ref{TokenWin}. There is at least one token circulation that will finish its execution (Suppose that this token circulation is the one that was initiated by $p$). Thus we are sure that the output buffer of $p$ (the next destination of $m$) will be free in a finite time (refer to Lemma \ref{EndFreeBuffer}. Thus the message in $m$ will be copied in the free output buffer. Note that once it is copied in the corresponding output buffer, it becomes in a suitable buffer. Hence the number of the messages that are not in a suitable buffer decreases at each time. Thus we are sure that at the end all the messages will be in the right direction and hence in a suitable buffer and the lemma holds. %Note that in the case the messages that are in the wrong direction are in the output buffer then this message will be for sure copied in the input buffer of the next process (Note that the input buffer will be free in a finite time: either it is consumed or it will be sent).
\end{proof}

%%%%%%%%%%%%%%%%%%%%%%   No token circulation is initiated when everything is alright     %%%%%%%%%%%%%%%%%%%%%%%%%%%%%%%%%%%%%%%

\begin{lem}\label{Token}
When the routing tables are stabilized and all the messages are in suitable buffer, no Token circulation is initiated.
\end{lem}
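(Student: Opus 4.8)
The plan is to trace every way in which a token circulation can come into existence and to show that, under the two hypotheses, none of them can fire. A token circulation is created only by the initiation rule $R1$ of Algorithm~\ref{algo:DFTC1}, and the guard of $R1$ requires $Token_p(q)=true$. Hence it suffices to control the Boolean $Token_p(q)$. This flag is raised (set to $true$) by a single rule of the forwarding algorithm, namely the road-change rule $R'10$, whose guard demands a neighbour $q$ with $IN_p(q)=(m,d,c)$ and $Next_p(d)=q$; in words, an input buffer holding a message whose stabilized next hop is exactly the link on which it arrived.

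First I would make precise what ``all messages are in a suitable buffer'' gives us: for every processor $p$, every neighbour $q$ and every message $IN_p(q)=(m,d,c)$, the message is in the right direction, i.e. $Next_p(d)=q'$ with $q'\ne q$ (a message sitting in a suitable input buffer never has to be forwarded back across its incoming link). This is exactly the negation of the conjunct $Next_p(d)=q$ appearing in the guard of $R'10$. Consequently, with the routing tables stabilized so that $Next_p(d)$ is the correct fixed value, $R'10$ is disabled at every processor, and no fresh $true$ value of $Token_p(q)$ is ever produced.

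It then remains to dispose of a $Token_p(q)$ that may already equal $true$ as a residue of the pre-stabilization behaviour, and to argue that such a residue is erased without ever enabling $R1$. Here I would invoke Lemma~\ref{Tokenp}: any $Token_p(q)=true$ is reset to $false$ in finite time. The point to check is that, in the present regime, the reset goes through the correction rule $R'17$ rather than through $R1$: since every occupied input buffer satisfies $Next_p(d)\ne q$ and an empty one satisfies $IN_p(q)=\epsilon$, the guard of $R'17$ holds whenever $Token_p(q)=true$, so the flag can be cleared harmlessly. Combining this with the previous paragraph, once the flag is down it is never raised again, $R1$ is never enabled, and no token circulation is initiated.

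The main obstacle I anticipate is precisely this last coexistence argument. In a configuration where $Token_p(q)=true$ is a leftover and the two token-states $S_{pqi}$ and $S_{pqo}$ happen to be clear, the guard of $R1$ may be satisfied at the same time as that of $R'17$, and under the distributed daemon nothing a priori forbids the daemon from choosing $R1$ and thereby initiating a spurious token. To close the proof cleanly I would therefore argue on the reachable clean-and-stabilized regime: by Lemmas~\ref{AbnSequence}, \ref{TokenClear} and \ref{Tokenp} the token sub-system first reaches a configuration in which every buffer is cleared and every flag is $false$; from such a configuration the argument above shows that $R'10$, hence every source of $Token_p(q)=true$, hence $R1$, stays disabled forever, so that no token circulation is initiated from that point on, and the lemma holds.
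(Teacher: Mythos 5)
Your proof follows the same route as the paper's: identify $R'10$ as the only rule that raises $Token_p(q)$, observe that its guard $Next_p(d)=q$ is contradicted by every message being in a suitable buffer once the routing tables are stabilized, and dispose of residual $true$ flags via Lemma~\ref{Tokenp}. You are in fact more careful than the paper about the corner case where a leftover $Token_p(q)=true$ could still fire $R1$ before $R'17$ clears it --- the paper's proof silently reads the lemma as holding from the eventually-reached clean configuration, which is exactly the resolution you propose.
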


\begin{proof}
According to Lemma \ref{Tokenp}. For any $q\in N_p$ $Token_{p}(q)$ will be set at false in a finite time. Note that the only rule that set $Token_{pq}$ at true is $R'10$. However $R'10$ is never enabled since all the messages on the system are in suitable buffer and since the routing tables are correct (all messages are generated and routed in suitable buffers as well). Thus the lemma holds.
\end{proof}

The fair pointer mechanism cannot be disturbed anymore by the token circulations. Note that our buffer graph is a DAG when the routing tables are stabilized Thus:

%%%%%%%%%%%%%%%%%%%%%%    No deadlock situation in the system      %%%%%%%%%%%%%%%%%%%%%%%%%%%%%%%%%%%%%%%%%%%%%%%%%%

\begin{lem}\label{Progresss}
All the messages progress in the system.
\end{lem}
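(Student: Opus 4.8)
The plan is to work in the regime established by the two preceding lemmas. By Lemma~\ref{MSGSUITBUFF}, once the routing tables are stabilized every message eventually sits in a suitable buffer, and by Lemma~\ref{Token} no token circulation is ever initiated from that point on. Hence only the normal-mode rules --- consumption ($R'2$), internal transmission ($R'3$), inter-node transmission ($R'5$), and slot reclamation ($R'7$, $R'8$), all of which run under the \emph{NO-Token} predicate --- govern the dynamics. I would first record that in this regime the active buffer graph is acyclic: each busy buffer has at most one successor (the relation $nb$ is a function), and by the structural observation of the overview a cycle on a tree would force some message with $nb(m,IN_p(q))=OUT_p(q)$, i.e.\ a message travelling in the wrong direction. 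Since all messages are now in suitable buffers, this never occurs, so following successor arcs can never revisit a buffer and the active graph is a finite acyclic functional graph.

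Next I would exploit acyclicity to locate a point of immediate progress. In such a graph every maximal chain $B_1 \mapsto B_2 \mapsto \cdots$ terminates at a buffer $B$ whose requested successor is free (empty, or already holding the forwarded copy) or which holds a message that has reached its destination. At such a terminal buffer one of the normal-mode rules is continuously enabled: $R'2$ when the message is at its destination, or a transmission rule copying the message into the free successor followed by an erasing rule ($R'7$/$R'8$) reclaiming the slot. Because the daemon is weakly fair and these guards stay true until fired, the terminal message advances in finite time and a free slot appears at its former location.

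The core of the argument is then to show that free slots propagate backward against the successor arcs, by induction on the topological rank of a buffer in the acyclic graph. Assuming every buffer of strictly smaller rank is eventually offered a free successor, a buffer of the current rank is too: once the slot reaches its successor, its own transmission/erasing rule becomes enabled and, by weak fairness, is eventually executed, shifting the free slot one buffer back. Since the graph is finite, every busy buffer is eventually offered a free successor, so every message advances at least one hop. Because each message is in a suitable buffer, its remaining path to the destination is finite and strictly shortened by each hop, so iterating the single-hop argument delivers it in finite time; Lemma~\ref{EXT} guarantees in addition that no extra buffer can block this process indefinitely.

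The hard part will be making the free-slot propagation rigorous while the system keeps moving, since other messages advance concurrently and the acyclic graph together with its terminal buffers changes between steps. In particular, when several input buffers compete for one shared output buffer the predecessor is not unique, and I must invoke the fair pointer to guarantee that the transmission carrying a given message is not starved by the competing ones; I must also check that the two-stage copy-then-erase nature of a forwarding step admits no transient configuration in which the propagating slot is lost. Pinning down a monotone measure that strictly decreases whenever a terminal buffer fires --- for instance the multiset, ordered lexicographically, of the remaining hop-distances of all messages, or simply their sum --- is what I expect to convert the informal propagation argument into a genuine termination proof.
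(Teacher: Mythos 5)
Your proposal is correct and follows essentially the same route as the paper: restrict to the regime after Lemmas~\ref{MSGSUITBUFF} and~\ref{Token}, observe that the active buffer graph is then a DAG, and propagate free slots backward by induction from a base point of guaranteed progress (the paper takes the input buffer of a leaf, where the message must be consumed by $R'2$; you take any terminal buffer of a maximal chain), invoking weak fairness and the fair pointer to resolve competition for shared output buffers. The paper phrases the induction as being on the even distance $\delta$ along a worst-case chain of full buffers rather than on topological rank, but the argument is the same.
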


\begin{proof}

In order to prove the lemma, it is sufficient to prove that all the buffers are continuously free. Note that if $\exists$ $q \in N_p$ such as , $IN_p(q)$ is free then if there is a message in $OUT_q(p)$, this message is automatically copied in $IN_p(q)$ and thus $OUT_q(p)$ becomes free. Hence it is sufficient to prove that the input buffers are free in a finite time. To do so, let's prove that $\forall$ $p \in I$, when there is a message in $IN_p(q)$, this message is deleted from $INp(q)$ in a finite time ($q \in N_p$). 

Recall that after the stabilization of the routing tables all the messages will be in the right direction and no token circulation is initiated (refer to Lemma \ref{MSGSUITBUFF} and \ref{Token}). Let consider the system after the stabilization of the routing tables and when all the messages are in the right direction. 
Let consider the message $m$ that is in the input buffer of the processor $p$, referred to as $B_1$. Let $B_1, B_2, B_3, ..., B_k$ the be the sequence of buffers starting from $B_1$such as $B_i=m'$ and $B_{i+1}$ is the next buffer by which $m'$ should pass by to reach its destination. Note that $B_i$ is an input buffer when $i$ is odd. In the worst case $\forall$ $1<i\leq k$ all the buffers are full and $B_k$ is the input buffer of a leaf processor that we will call $p_0$ (Recall the all the messages are on the right direction). Note that the input-buffers in the sequence $B_1, B_2, B_3, ..., B_k$ are all at an even distance from the input buffer $B_k$. Let define $\delta$ as the distance between the input buffer of the processor $p$ and the input buffer of processor $p_0$ ($B_k$). The lemma is proved by induction on $\delta$. We define for this purpose the following predicate $P_\delta$:
If there is a message m in $B_i$ such as $B_i$ is an input buffer and at distance $\delta$ from the input buffer $B_k$ then one of these two cases happens:
\begin{itemize}
\item $m$ is consumed and hence delivered to its destination.
\item $m$ is deleted from the input buffer and copied in $B_{i+1}$ (which is an output buffer).
\end{itemize}

\paragraph{\textbf{Initialization.}} Let's prove the result for $B_k$ on $p_0$. Suppose that there is a message $m$ in $B_k$. In this case we are sure that $p_0$ is the destination of the message (otherwise the message $m$ is in the wrong direction since $p_0$ is a leaf processor). Thus, in this case, since the daemon is weakly fair and since $R'2$ keep being enabled on $p_0$ then $R'2$ will be executed in a finite time and the message $m$ in $B_k$ is consumed. Thus $P_0$ is true.

\paragraph{\textbf{Induction.}} let $\delta$ ≥ 1. We assume that $P_{2\delta}$ is true and we prove that $P_{2\delta+2}$ is true as well (Recall that the input buffers are at an even distance from the input buffer $B_k$ that in the processor $p_0$). Let $B_i$ on be the input buffer of $p$ that is at distance $2\delta$ from $B_k$ and $B_{i-2}$ the one that is on $p'$ being at distance $2\delta + 2$ from $B_k$ containing the message $m′$. In the case where the destination of $m′$ is $p'$ then it will be consumed in a finite time (the daemon is weakly fair and $R'2$ keep being enabled on $p'$. Thus $p'$ will execute $R'2$ in a finite time). Hence $P_{2\delta+2}$ is in this case true. In the other case (the destination of $m′$ is different from $p'$), since $P_{2\delta}$ is true then if there is a message $m$ in $B_i$ then we are sure that this message will be either consumed or copied in $B_{i+1}$. Thus $B_i$ becomes a free buffer. The cases bellow are possible according to the rule that is executed on $B_{i-1}$:
\begin{enumerate}
\item \label{Progress} $R'3$ is executed. In this case one message that is in an input buffer of $p$ is copied in $B_{i-1}$. However, since the pointer on $B_{i-1}$ is fair, we are sure that the message $m$ in $B_{i-2}$ will be copied in $B_{i-1}$. Thus $B_{i-2}$ will be free in a finite time and the lemma holds.
\item a message $m'$ is generated in $B_{i-1}$. However since $P_{2\delta}$ is true $B_i$ becomes free in a finite time thus $m'$ will be copied in $B_i$ in a finite time. Thus $B_{i-1}$ becomes free. Nevertheless, since one message has been generated in the previous step, we are sure that $R'3$ will be the one that will be executed. Thus we retrieve Case \ref{Progress}.
\end{enumerate}
\end{proof}
%%%%%%%%%%%%%%%%%%%%%%    Each message can be generated       %%%%%%%%%%%%%%%%%%%%%%%%%%%%%%%%%%%%%%%%%%%%%%%%%%%%

\begin{lem}\label{MessageGeneration}
Any message can be generated in a finite time under a weakly fair daemon.
\end{lem}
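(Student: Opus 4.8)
The plan is to show that rule $R'1$ --- the only rule able to generate a message --- becomes executable in a finite time for any processor $p$ that holds a pending request. Rule $R'1$ fires at $p$ for a destination $d$ with $Next_p(d)=q$ as soon as two conditions hold simultaneously: the predicate $NO-Token$ is satisfied, and the output buffer $OUT_p(q)$ is free (\ie $OUT_p(q)=\epsilon$ or $OUT_p(q)=IN_q(p)$). The whole argument therefore reduces to proving that both conditions are eventually and recurrently met, and then invoking the weak fairness of the daemon together with the fair pointer that arbitrates the use of each output buffer.

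First I would establish that $NO-Token$ eventually holds permanently. Once the routing tables are stabilized, Lemma \ref{MSGSUITBUFF} guarantees that all messages reach suitable buffers in a finite time, and Lemma \ref{Token} then ensures that no new token circulation is ever initiated. Every token circulation still present in the configuration --- whether a valid one or an abnormal sequence inherited from the initial configuration --- clears all its buffers in a finite time by Lemmas \ref{TokenClear} and \ref{AbnSequence}. Hence there is a time after which no buffer of $p$ (nor of its neighbours) carries a token, \ie $NO-Token$ is permanently true.

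Next I would use Lemma \ref{Progresss} to obtain that the relevant output buffer is free infinitely often. That lemma shows that, after stabilization, every input buffer is freed in a finite time; since a full output buffer $OUT_p(q)$ is emptied (or matched with $IN_q(p)$) as soon as the input buffer it feeds progresses, $OUT_p(q)$ is itself free infinitely often. Combining this with the previous paragraph, for a processor $p$ carrying $Request_p$ with next hop $q$, the guard of $R'1$ is satisfied at infinitely many instants after some finite time.

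The main obstacle, and the point requiring the most care, is that the generation action $R'1$ competes for $OUT_p(q)$ with the internal transmission action $R'3$, so one must rule out that forwarding indefinitely monopolises the buffer and starves generation (after stabilization the road-change rule $R'9$ no longer applies, since no message travels in the wrong direction). This is exactly what the fair pointer is designed to prevent: between two successive freeings of $OUT_p(q)$ the pointer rotates among the candidate actions on that buffer, so the generation action is granted its turn after a bounded number of uses. Once the pointer selects generation, no competing action can refill $OUT_p(q)$, so as soon as $OUT_p(q)$ is free and $NO-Token$ holds, $R'1$ remains \emph{continuously} enabled at $p$; by weak fairness of the daemon, $p$ is eventually chosen and executes $R'1$, generating the message in a finite time. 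This establishes requirement $(i)$ of the specification $SP$.
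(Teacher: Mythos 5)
Your proof is correct and follows essentially the same route as the paper: both arguments rest on Lemma \ref{Token} to eliminate token circulations after stabilization, on the progress of messages (Lemma \ref{Progresss}) to guarantee that $OUT_p(q)$ is freed recurrently, and on the fair pointer plus weak fairness to ensure that $R'1$ is eventually selected over $R'3$. Your explicit justification that $NO$-$Token$ becomes permanently true (via Lemmas \ref{TokenClear} and \ref{AbnSequence} for residual circulations) is a slightly more careful spelling-out of a step the paper leaves implicit, but it is not a different argument.
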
  

\begin{proof}
According to Lemma \ref{Token}, no token is initiated when the routing tables are stabilized and when all the messages are in suitable buffers, thus the fair pointer mechanism cannot be disturbed by the token circulation anymore. Note that since the routing tables are stabilized and since the buffer graph is a DAG no deadlocks happens. Thus all the messages progress in the system. Suppose that the processor $p$ wants to generate a message. Recall that the generation of a message $m$ for the destination $d$ is always done in the output buffer of the processor $p$ connected to the link $(p, q)$ such as $Next_p(d) = q$. Two cases are possible:
\begin{enumerate}
\item \label{YESF} $OUT_p(q) = \epsilon$ . In this case, the processor executes either $R'1$ or $R'3$ in a finite time. the result of this execution depends on the value of the pointer. Two cases are possible:
\begin{itemize}
\item the pointer refers to $R'1$. Then $p$ executes $R'1$ and hence it generates a message. Thus we obtain the result.
\item the pointer refers to $R'3$. Then $p$ executes $R'3$ in a finite time. Hence $OUT_p(q)\ne \epsilon$ and we retrieve case \ref{NOTF}. Note that the fairness of the pointer guarantees us that this case cannot appear infinitely.
\end{itemize}
\item \label{NOTF} $OUT_p(q)\ne \epsilon$. Since all the messages move gradually in the buffer graph we are sure that $OUT_p(q)$ will be free in a finite time and we retrieve \ref{YESF}.
\end{enumerate}
We can deduct that every processor can generate a message in a finite time. 
\end{proof}

%%%%%%%%%%%%%%%%%%%%%%%%%%%%%%%%%%%%      Theorem chbah ^o^   %%%%%%%%%%%%%%%%%%%%%%%%%%%%%%%%%%%%%%%%%%%%%%%%

We can now state the following Theorem:

\begin{theorem}\label{NoDeadlockStarvation}
Neither deadlock nor starvation situations appear in the system.
\end{theorem}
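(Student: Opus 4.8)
The plan is to assemble Theorem~\ref{NoDeadlockStarvation} directly from the machinery already established, treating ``no deadlock'' and ``no starvation'' as two facets of the same progress guarantee. The central observation is that a deadlock or starvation situation would manifest as some buffer that remains continuously busy forever: a deadlock as a cycle of mutually blocked buffers, and starvation as a message that is never forwarded or consumed. Hence it suffices to show that every buffer is infinitely often free, which is precisely the content delivered by Lemma~\ref{Progresss} (all messages progress) together with the preceding structural lemmas. First I would reduce the theorem to this single claim by arguing that the absence of a permanently busy buffer rules out both pathologies simultaneously.

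Next I would separate the argument into the transient (pre-stabilization) regime and the stabilized regime, since the buffer graph only becomes a DAG once the routing tables stabilize. For the transient phase I would invoke Lemma~\ref{AbnSequence} to clear any spurious token sequences left by the arbitrary initial configuration, Lemma~\ref{EXT} to guarantee that no extra buffer stays busy forever, and Lemma~\ref{TokenWin} to ensure that whenever token circulations are initiated at least one validates its path and thereby frees an output buffer and breaks a cycle. Combining these with Lemma~\ref{MSGSUITBUFF}, I would conclude that after the routing tables stabilize every message reaches a suitable buffer in finite time, so the number of misdirected messages strictly decreases and no message is trapped in a deadlock cycle during convergence.

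For the stabilized regime I would appeal to Lemma~\ref{Token}, which states that no token circulation is initiated once all messages sit in suitable buffers, so the fair-pointer mechanism runs undisturbed on a DAG buffer graph. On a DAG there are no cycles in the active buffer graph, hence no deadlock is structurally possible; Lemma~\ref{Progresss} then shows by its induction on the distance $\delta$ to a leaf that every message is forwarded or consumed in finite time, which is exactly the statement that no message starves. Since the weakly fair daemon guarantees every continuously enabled processor is eventually selected, no message can wait indefinitely while remaining forwardable.

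The main obstacle I anticipate is the interface between the two regimes: I must argue carefully that the finite (though unbounded) number of token circulations and misdirected messages inherited from the arbitrary initial configuration are all resolved \emph{before} the DAG-based progress argument takes over, and that no new deadlock can be created during this cleanup. This is where the strict monotonicity from Lemma~\ref{MSGSUITBUFF} (the count of misdirected messages never increases and eventually decreases) does the heavy lifting, ensuring the system cannot oscillate indefinitely between creating and breaking cycles. Once that monotone potential argument is in place, the theorem follows by combining it with Lemma~\ref{Progresss}, and I would close by noting that both conjuncts---no deadlock and no starvation---are thereby established in finite time from any initial configuration.
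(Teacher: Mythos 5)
Your reduction of the theorem to ``every buffer is infinitely often free'' covers only half of what the paper means by starvation here. The paper's own proof is a two-line combination of Lemma~\ref{Progresss} (all messages already in the system progress, hence no deadlock and no message is stuck in a buffer) \emph{and} Lemma~\ref{MessageGeneration} (any processor that requests to generate a message succeeds in finite time). You reconstruct, in considerable detail, the chain of lemmas that already sits behind Lemma~\ref{Progresss} and Lemma~\ref{MSGSUITBUFF} --- the clearing of abnormal sequences, the eventual freeing of the extra buffers, the winning token circulation, the DAG structure after stabilization --- which is sound but redundant, since the theorem is meant to be assembled from the packaged lemmas rather than re-proved from scratch. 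What is genuinely missing is the generation side: requirement $(i)$ of Specification~\ref{spec:SP} is that any message can be \emph{generated} in finite time, and the starvation the theorem rules out includes a processor whose $Request_p$ is never served.

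That gap is not closed by buffers being infinitely often free. Even when $OUT_p(q)$ is free infinitely often, rule $R'1$ (generation) competes with rule $R'3$ (internal transmission) for the same output buffer, and without the fair-pointer argument of Lemma~\ref{MessageGeneration} the internal transmissions could win every time, starving the local request forever. Your text mentions that ``the fair-pointer mechanism runs undisturbed'' but never draws the conclusion that a pending generation request is eventually selected; you explicitly define starvation as ``a message that is never forwarded or consumed,'' which presupposes the message already exists. To complete the proof you need to either cite Lemma~\ref{MessageGeneration} directly or reproduce its case analysis on $OUT_p(q)$ and the fairness of the pointer. With that addition your argument matches the paper's; without it, only the deadlock conjunct and the progress of already-generated messages are established.
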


\begin{proof}
According to Lemma \ref{Progresss}. All the messages progress in the system. Thus we are sure that there is no message that stays locked on one buffer. in another hand according to Lemma \ref{MessageGeneration}, every processors will be able to generate a message. Hence the Theorem holds.
\end{proof}

%%%%%%%%%%%%%%%%%%%%%%%%%%%%%%%%%%%%%      No message is duplicated      %%%%%%%%%%%%%%%%%%%%%%%%%%%%%%%%%%%%%%%%%

\begin{lem}\label{NoMsgDuplicated}
The forwarding protocol never duplicates a valid message even if the routing algorithm runs simultaneously. 
\end{lem}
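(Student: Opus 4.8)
The plan is to prove Lemma~\ref{NoMsgDuplicated} by exhaustively examining every rule of the three algorithms that either copies a message into a new buffer or writes a message-carrying value, and showing that none of them creates a second live copy of a valid message. The key observation is that a duplication can only arise if some rule writes a message $m$ into a buffer $B'$ while leaving the same message $m$ still live in its source buffer $B$. So first I would partition the rules into those that genuinely \emph{move} a message (copy-then-erase, or copy-while-source-is-already-a-duplicate) and those that only \emph{clean} token-circulation state (Rules $R1$--$R44$ touch only the $S_{\cdot}$ attributes and never the message payload of a buffer, so they are immediately irrelevant to duplication). This reduces the problem to the forwarding rules $R'1$--$R'14$.

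Next I would treat the forwarding rules in groups according to the invariant ``$nb(m,\cdot)$'' structure of the buffer graph. For generation ($R'1$) and reception ($R'5$, $R'6$), the target buffer is guaranteed empty by the guard and no prior copy is removed, but here there was no second copy to begin with, so no duplication occurs. For the transmission rules that copy from an input buffer to an output buffer or from the extra buffer ($R'2$, $R'3$, $R'4$, $R'9$, $R'11$, $R'12$, $R'13$, $R'14$), the statement is always of the form ``write $m$ into the new buffer \emph{and simultaneously} overwrite the source with $OUT_q(p)$ (or with $\epsilon$)''; because each step is atomic, the source ceases to hold a live copy of $m$ in the very same step that the destination acquires it, so the total number of live copies of $m$ is preserved rather than increased. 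The subtle point, which I would make explicit, is the relationship between an output buffer $OUT_p(q)$ and the input buffer $IN_q(p)$ of the neighbour: when $OUT_p(q)=IN_q(p)$ the two buffers hold the \emph{same} message, and by the definition of a busy/free buffer given in the overview, $OUT_p(q)$ is then considered free, i.e.\ already a duplicate slated for erasure. I would show that the erasing rules $R'7$ and $R'8$ fire only under exactly the guard $OUT_p(q)=IN_q(p)$, so erasing $OUT_p(q)$ merely discards a copy whose twin already lives safely in $IN_q(p)$; this is the mechanism that keeps the copy count from ever exceeding one outside of a single atomic handoff.

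The last group to handle is the composition with the routing algorithm and the ``road change'' rule $R'9$, since the lemma explicitly claims robustness even while the routing tables are still stabilizing. Here I would argue that a change in $Next_p(d)$ can only alter \emph{which} output buffer a given input message is destined for; it never spawns a write into a buffer without a matching overwrite of the source. Rule $R'9$ performs the swap $OUT_p(q):=IN_p(q)$ together with $IN_p(q):=OUT_q(p)$ in one atomic step, so again the count of live copies of the message originally in $IN_p(q)$ is conserved. I would also invoke Theorem~\ref{NoMsgDeleted} to rule out the degenerate ``duplication by re-creation'' scenario: since no valid message is ever deleted except upon delivery, a message cannot first be silently lost and then regenerated as an apparent duplicate.

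The main obstacle I expect is the extra-buffer detour and its interaction with the escort phase, namely Rules $R'11$/$R'12$ (copy $IN_p(q)\to EXT_p$) paired with $R'13$/$R'14$ (copy $EXT_p\to OUT_p(q)$). Here a valid message temporarily resides in $EXT_p$ while $IN_p(q)$ is overwritten, and I must verify that between these two atomic steps no other rule can read $EXT_p$ and copy its contents elsewhere, and that the guards of $R'13$/$R'14$ (which require the escort-phase synchronization $S_{pqo}=(id,\ldots,E,\ldots)$) cannot fire twice for the same $EXT_p$ occupancy. Lemma~\ref{EndFreeBuffer} is the tool that settles this: it guarantees that by the time the target output buffer enters state $E$, the handoff out of $EXT_p$ happens exactly once, after which $EXT_p$ is free, so no second copy of the escorted message can leak out. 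Closing this synchrony argument cleanly is the delicate part of the proof; the remaining rule-by-rule checks are routine.

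\begin{proof}

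\end{proof}
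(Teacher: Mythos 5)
Your proposal follows essentially the same route as the paper's proof: an exhaustive case analysis showing that every rule which copies a valid message also overwrites its source in the same atomic step, and that the one genuine two-copy situation ($OUT_p(q)=IN_q(p)$ after transmission across a link) is harmless because the copy in $IN_q(p)$ is inert until the copy in $OUT_p(q)$ is erased by $R'7$/$R'8$. Your handling of the extra-buffer detour via Lemma~\ref{EndFreeBuffer} is somewhat more explicit than the paper's, which merely observes that the $EXT_p \to OUT_p(q)$ transfer is a local, sequential copy-then-erase, but the underlying argument is the same.
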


\begin{proof}
Let consider the message $m$. The cases below are possible:
\begin{itemize}
\item $m$ is in $EXT_p$. $m$ is then either deleted or copied in $OUT_p(q)$. Since this operation is a local operation (the copy is done between two buffer of the same processor) $m$ is copied in the new buffer and deleted from the previous one in a sequential manner.
\item $m$ is in $IN_p(q)$. The cases are then possible:
          \begin{itemize}
             \item $m$ is consumed ($R'2$ is executed). The message $m$ is deleted since a new value overwrites it.
             \item $m$ is copied in the extra buffer ($R'11$ or $R'12$ is executed). The message $m$ is copied in the extra buffer and deleted from the input buffer since in both cases a new value overwrites it.
             \item $m$ is copied in the output buffer ($R'3$ or $R'4$ is executed). Note that this operation is a local operation. Thus $m$ is copied in the output buffer an deleted from the input buffer (a new value overwrites it).
          \end{itemize}
\item $m$ is in $OUT_p(q)$. $m$ is then copied in the input buffer of the processor $q$ ($IN_q(p)$). Hence two copies are in the system. However $m$ in $IN_q(p)$ is neither consumed nor transmitted unless the copy in $OUT_p(q)$ is deleted  (see Rules $R'2$, $R'3$ and $R'4$).
\end{itemize}

From the cases above we can deduce that no message is duplicated in the system.
\end{proof}

%%%%%%%%%%%%%%%%%%%%%%%%%%%%%%%   Theorem ante3 assah ^^    %%%%%%%%%%%%%%%%%%%%%%%%%%%%%%%%%%%%%%%%%%%%%%%

\begin{theorem}
The proposed algorithm (Algorithms 1, 2 and 3) is a snap-stabilizing message forwarding algorithm (satisfying $SP$) under a weakly fair daemon.
\end{theorem}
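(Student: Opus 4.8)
The plan is to establish the two clauses of Specification~\ref{spec:SP} separately and then to observe that snap-stabilization comes for free, since none of the auxiliary results assumes anything about the starting configuration. First I would dispatch clause $(i)$, \ie that any message can be generated in finite time: this is exactly Lemma~\ref{MessageGeneration}, which shows that the target output buffer $OUT_p(Next_p(d))$ is free infinitely often and that the fair pointer guarantees that rule $R'1$ is eventually selected rather than $R'3$ being chosen forever. Hence clause $(i)$ holds immediately.

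For clause $(ii)$ I would split ``delivered once and only once in finite time'' into three independent guarantees. The \emph{no-loss} part (delivery at least once) follows from Theorem~\ref{NoMsgDeleted}: a valid message is never removed from the system except by being delivered to its destination. The \emph{no-duplication} part (delivery at most once) follows from Lemma~\ref{NoMsgDuplicated}: each move writes the message into its new buffer and erases it from the old one before any further copy can be consumed or re-forwarded, so at most one live copy of a valid message ever exists. The \emph{finite-time} part is the convergence argument: the underlying self-stabilizing routing algorithm makes the tables correct in finite time; from there Lemma~\ref{MSGSUITBUFF} drives every message into a suitable buffer, Lemma~\ref{Token} forbids any new token circulation, the active buffer graph becomes a DAG, and Lemma~\ref{Progresss} together with Theorem~\ref{NoDeadlockStarvation} guarantees that every buffer is infinitely often free, so each message strictly advances toward its destination and is ultimately consumed by $R'2$. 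Combining the three parts yields clause $(ii)$.

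Finally I would argue snap-stabilization itself. Every lemma invoked above is proved starting from an \emph{arbitrary} configuration, so Specification~\ref{spec:SP} is met from the very first step of every execution: there is no prefix during which a valid message could be mishandled, which is precisely stabilization in zero time. The subtle point, and the one I expect to be the main obstacle, is the pre-stabilization regime: before the routing tables are correct, valid messages may travel in the wrong direction, cycles and deadlocks may form, and the initial state may contain garbage (abnormal token sequences, spurious extra-buffer contents, corrupted token variables). I must therefore verify two things simultaneously --- that this garbage is purged without ever touching a valid message, which is ensured by Lemma~\ref{AbnSequence}, Lemma~\ref{EXT}, and the correction rules $R'15$--$R'17$; and that deadlocks among wrongly-directed messages are broken by the token-circulation and escort machinery while Theorem~\ref{NoMsgDeleted} keeps the valid messages alive. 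The hard part is showing that these two finite phases compose into an overall finite delivery time, \ie that each deadlock resolution strictly decreases the number of wrongly-placed valid messages (as in the counting argument of Lemma~\ref{MSGSUITBUFF}) so the corrective work terminates, after which the DAG-progress argument takes over and bounds the remaining delivery time.
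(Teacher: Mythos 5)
Your proposal is correct and follows essentially the same route as the paper: the paper's own proof of this theorem is a three-line combination of Theorem~\ref{NoMsgDeleted} (no valid message deleted), Theorem~\ref{NoDeadlockStarvation} (no deadlock/starvation, hence generation and progress), and Lemma~\ref{NoMsgDuplicated} (no duplication), which is exactly your decomposition into no-loss, no-duplication, and finite-time generation/delivery. Your version is in fact more explicit than the paper's about the finite-time delivery argument (via Lemmas~\ref{MSGSUITBUFF}, \ref{Token}, and \ref{Progresss}) and about why the specification holds from an arbitrary initial configuration, but these points are all drawn from the same supporting lemmas, so no new idea is introduced.
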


\begin{proof}
From Theorem \ref{NoMsgDeleted} no valid message is deleted. From Theorem \ref{NoDeadlockStarvation} There is no deadlocks in the system and all the processors are able to generates messages in a finite time. From Lemma \ref{NoMsgDuplicated}, no message is duplicated. Hence, the theorem holds.

\end{proof}

%%%%%%%%%%%%%%%%%%%%%%%%%%%%%%%%    End of the proofs   %%%%%%%%%%%%%%%%%%%%%%%%%%%%%%%%%%%%%%%%%%%%%%%%%%%%%%
  
\section{Conclusion}\label{sec:Conclusion}    
In this paper, we presented the first snap-stabilizing message forwarding protocol on trees that uses a number of
buffers per node being independent of any global parameter. Our protocol uses only $4$ buffers per link and an
extra one per node. This is a preliminary version to get a solution that tolerates topology changes provided
that the topology remains a tree.

\newpage

\begin{scriptsize}
\bibliographystyle{splncs}
\bibliography{Token}

\begin{thebibliography}{10}

\bibitem{CDLPV10}
Cournier, A., Dubois, S., Lamani, A., Petit, F., Villain, V.:
\newblock Snap-stabilizing linear message forwarding.
\newblock In: Stabilization, Safety, and Security of Distributed Systems - 12th
  International Symposium, SSS 2010, New York, NY, USA, September 20-22, 2010.
  Proceedings. Volume 6366. (2010)  546--559

\bibitem{D00}
Dolev, S.:
\newblock Self-stabilization.
\newblock MIT Press (2000)

\bibitem{Bui07}
Bui, A., Datta, A., Petit, F., Villain, V.:
\newblock Snap-stabilization and {PIF} in tree networks.
\newblock Distributed Computing \textbf{20} (2007)  3--19

\bibitem{Duato96}
Duato, J.:
\newblock A necessary and sufficient condition for deadlock-free routing in
  cut-through and store-and-forward networks.
\newblock IEEE Trans. Parallel Distrib. Syst. \textbf{7} (1996)  841--854

\bibitem{MerlinS78}
Merlin, P.M., Schweitzer, P.J.:
\newblock Deadlock avoidance in store-and-forward networks.
\newblock In: Jerusalem Conference on Information Technology. (1978)  577--581

\bibitem{Toueg80}
Toueg, S.:
\newblock Deadlock- and livelock-free packet switching networks.
\newblock In: STOC. (1980)  94--99

\bibitem{TouegU81}
Toueg, S., Ullman, J.D.:
\newblock Deadlock-free packet switching networks.
\newblock SIAM J. Comput. \textbf{10} (1981)  594--611

\bibitem{APV96}
Awerbuch, B., Patt-Shamir, B., Varghese, G.:
\newblock Self-stabilizing end-to-end communication.
\newblock Journal of High Speed Networks \textbf{5} (1996)  365--381

\bibitem{KOR95}
Kushilevitz, E., Ostrovsky, R., Ros\'{e}n, A.:
\newblock Log-space polynomial end-to-end communication.
\newblock In: STOC '95: Proceedings of the twenty-seventh annual ACM symposium
  on Theory of computing, ACM (1995)  559--568

\bibitem{CDV09-1}
Cournier, A., Dubois, S., Villain, V.:
\newblock A snap-stabilizing point-to-point communication protocol in
  message-switched networks.
\newblock In: 23rd IEEE International Symposium on Parallel and Distributed
  Processing (IPDPS 2009). (2009)  1--11

\bibitem{CDV09-2}
Cournier, A., Dubois, S., Villain, V.:
\newblock How to improve snap-stabilizing point-to-point communication space
  complexity?
\newblock In: Stabilization, Safety, and Security of Distributed Systems, 11th
  International Symposium (SSS 2009). Volume 5873 of Lecture Notes in Computer
  Science. (2009)  195--208

\bibitem{D74}
Edsger, W., Dijkstra.:
\newblock Self-stabilizing systems in spite of distributed control.
\newblock Commum. ACM \textbf{17} (1974)  643--644

\bibitem{BGM89}
Burns, J., Gouda, M., Miller, R.:
\newblock On relaxing interleaving assumptions.
\newblock In: Proceedings of the MCC Workshop on Self-Stabilizing Systems, MCC
  Technical Report No. STP-379-89. (1989)

\bibitem{MS78}
Merlin, P.M., Schweitzer, P.J.:
\newblock Deadlock avoidance in store-and-forward networks.
\newblock In: Jerusalem Conference on Information Technology. (1978)  577--581

\end{thebibliography}
\end{scriptsize}

\end{document}